\useunder{\uline}{\ul}{}
\newtheorem{definition}{Definition}
\newtheorem{theorem}{Theorem}
\newtheorem{lemma}{Lemma}
\newcommand{\bb}[1]{\mathbf{#1}}
\definecolor{darkblue}{rgb}{0,0.0,0.7}
\definecolor{Gray}{gray}{0.9}
\newcommand{\md}[1]{{\color{black} #1}}
\newcommand{\me}[1]{{\color{black} #1}}
\newcommand{\mf}[1]{{\color{black} #1}}
\newcommand{\mg}[1]{{\color{black} #1}}
\DeclareMathOperator{\sgn}{sgn}
\DeclareMathOperator{\ReLU}{ReLU}
\title{Precise Approximation of Convolutional Neural Networks for Homomorphically Encrypted Data}
\author{
  %David S.~Hippocampus\thanks{Use footnote for providing further information
  %  about author (webpage, alternative address)---\emph{not} for acknowledging
  %  funding agencies.} \\
  %Department of Computer Science\\
  %Cranberry-Lemon University\\
  %Pittsburgh, PA 15213 \\
  %\texttt{hippo@cs.cranberry-lemon.edu} \\
  % examples of more authors
  %\And
  Junghyun Lee\thanks{Equal contribution.}$^{\:\:1}$, Eunsang Lee${}^{*1}$, Joon-Woo Lee\thanks{Corresponding authors.} ${}^{\:1}$,\\
  \textbf{Yongjune Kim}$^{\dagger 2}$ \textbf{, Young-Sik Kim}$^3$ \textbf{, Jong-Seon No}${}^1$
  \\
  \\
  $^1$Department of Electrical and Computer Engineering, INMC, Seoul National University \\
  $^2$Department of Information and Communication Engineering, DGIST
  \\
  $^3$Department of Information and Communication Engineering, 
  Chosun University\\
  \texttt{ljhfree530@snu.ac.kr, shaeunsang@snu.ac.kr, joonwoo42@snu.ac.kr,} \\
  \texttt{yjk@dgist.ac.kr,
  iamyskim@chosun.ac.kr,
  jsno@snu.ac.kr} \\
}
\begin{document}

\maketitle
\begin{abstract}
\md{Homomorphic encryption is one of the representative solutions to privacy-preserving machine learning (PPML) classification \mf{enabling} the server to classify private data of clients while guaranteeing privacy. This work focuses on PPML using word-wise fully homomorphic encryption (FHE)}. In order to implement deep learning on \md{word-wise homomorphic encryption (HE)}, the ReLU and max-pooling functions should be approximated by some polynomials for homomorphic operations. \mf{Most of the previous studies focus on HE-friendly networks, where the ReLU and max-pooling functions are approximated using low-degree polynomials.} However, for the classification of the CIFAR-10 dataset, using a low-degree polynomial requires designing a new deep learning model and training. \mf{In addition, this approximation by low-degree polynomials cannot support deeper neural networks due to large approximation errors.} Thus, we propose a precise polynomial approximation technique for the ReLU and max-pooling functions. \mf{Precise approximation using a single polynomial requires an exponentially high-degree polynomial, which results in a significant number of non-scalar multiplications}. Thus, we propose a method to approximate the ReLU and max-pooling functions accurately using a \emph{composition of minimax approximate polynomials of small degrees}. If we replace the ReLU and max-pooling functions with the proposed approximate polynomials, \md{well-studied} deep learning models such as ResNet and VGGNet can still be used without further modification for PPML on FHE. Even pre-trained parameters can be used without retraining. \mg{We} approximate the ReLU and max-pooling functions in the ResNet-152 using the composition of minimax approximate polynomials of degrees 15, 27, and 29. \mg{Then,} we succeed in classifying the plaintext ImageNet dataset with 77.52\% accuracy, which is very close to the original model accuracy of 78.31\%.
%\footnote{Code available at \url{https://github.com/eslee3209/precise_approximation}.}

\end{abstract}

\section{Introduction}
%Based on advances in machine learning technology and vast amounts of data, machine learning is widely used in various applications, including medical diagnostics, face recognition, and natural language processes. As more and more companies offer machine learning services, privacy preservation technologies in machine learning are getting more important. Machine learning requires a large amount of data, which includes sensitive private data such as DNA data, medical data, and financial data. Due to recognizing the importance of privacy preservation, many laws on data privacy have been enacted, and companies are severely constrained in providing machine learning services if they lack the technology to guarantee privacy. Thus, many studies on privacy-preserving machine learning (PPML) have been conducted \cite{k-anonymity, differential_privacy, federated_learning}.

\md{One of the most promising areas in privacy-preserving machine learning (PPML) is deep learning using homomorphic encryption (HE).} HE is a cryptographic system that allows algebraic operations such as addition and multiplication on encrypted data. Using HE, the server can perform deep learning on the encrypted data from the client while fundamentally preventing the privacy leakage of private data. \mg{Recently,} deep learning using HE has been studied \mg{extensively} \cite{gilad2016cryptonets, gazelle, sealion, ngraph, FasterCryptoNets, cheetah, HCNN_GPU}. These studies can be classified into deep learning studies using only HE and those using both HE and multi-party computation (MPC) techniques together. Our work focuses on \mg{the former}, which has advantages of low communication cost and \mg{no risk of model exposure} compared to hybrid methods using both HE and MPC techniques. In particular, we focus on deep learning using the word-wise fully homomorphic encryption (FHE), which allows the addition and multiplication of an encrypted array over the set of complex numbers $\mathbb{C}$ or $\mathbb{Z}_p$ for an integer $p>2$ without restriction on the number of operations.

%Before Gentry’s work \cite{gentry} in 2009, HE schemes could only perform a few specific operations on the encrypted data. Fully homomorphic encryption (FHE) is a cryptographic system that was first developed in \cite{gentry}, and all algebraic operations on the encrypted data in FHE can be performed without restriction. In particular, this work focuses on word-wise FHE, which supports the addition and multiplication of an encrypted array over the set of complex numbers $\mathbb{C}$ or $\mathbb{Z}_p$ for an integer $p>2$, and this word-wise FHE have been widely used in PPML using HE \cite{gazelle, sealion, ngraph, FasterCryptoNets, HCNN_GPU, codespy}. The performance of FHE techniques has been improved a lot enough to be used in practical applications. For example, Cheon-Kim-Kim-Song (CKKS) scheme \cite{CKKS}, one of the representative word-wise FHE schemes, has been improved much in recent years by bootstrapping algorithm improvements \cite{lee2020optimal,bossuat2020efficient,variance} and GPU implementations \cite{GPUCKKS}. Furthermore, significant improvements in bootstrapping accuracy have resulted in achieving a precision of up to 40 bits \cite{lee2020optimal}, and laid a good foundation for performing deep learning on FHE for practical uses.

In word-wise FHE, the ReLU and max-pooling functions cannot be performed on the encrypted data because they are non-arithmetic operations. \mg{Thus, after the approximate polynomials replace the ReLU and max-pooling functions, deep learning on the encrypted data can be performed. Most of the previous studies of deep learning on HE focus on HE-friendly networks, where the ReLU and max-pooling functions are approximated using low-degree polynomials.} These studies on HE-friendly networks \me{have} some limitations.
%Most of the previous studies are limited to classifying MNIST dataset. 
\mg{For instance, designing a new deep learning model and training are required in \cite{FasterCryptoNets,HCNN_GPU,codespy}, which classified the CIFAR-10 \cite{krizhevsky2009learning}.} However, in many real-world applications, training data or training frameworks are not accessible, and thus, the previously studied low-degree approximate polynomials cannot be used in these applications. Furthermore, HE-friendly networks cannot support very deep neural networks due to large approximation errors. \mg{On the other hand, the computational time and accuracy of bootstrapping in word-wise FHE have recently been improved significantly \cite{GPUCKKS,lee2020optimal, variance, scale_invariant}, allowing the evaluation of higher-degree polynomials for many times. Thus, it is essential to study precise approximate polynomials.}

\subsection{Our contributions}
\begin{figure}
    \centering
    \includegraphics[width=\linewidth]{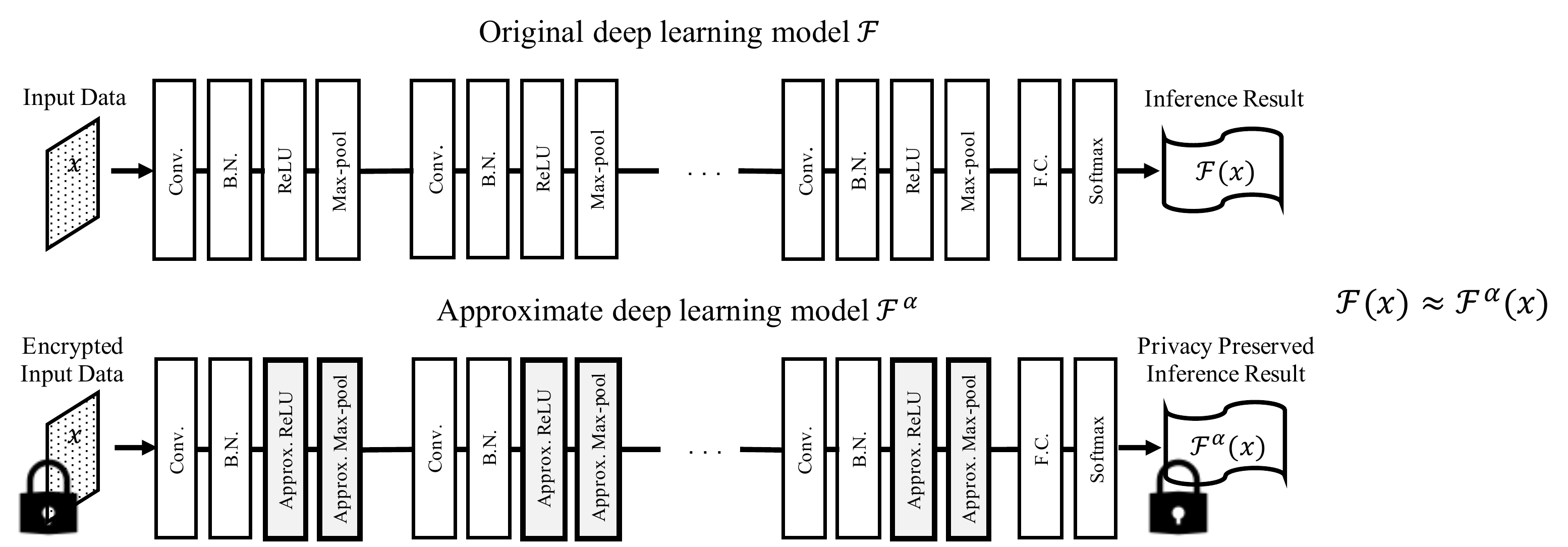}
    \caption{Comparison between the original deep learning model $\mathcal{F}$ and the proposed approximate deep learning model $\mathcal{F}^\alpha$ with the precision parameter $\alpha$. The proposed model is obtained by replacing the ReLU and max-pooling functions in the original models with the proposed approximate polynomials.}
    %Diagram of comparison between the original deep learning model $\mathcal{F}$ and the proposed approximate deep learning model $\mathcal{F}^\alpha$ with the precision parameter $\alpha$. The proposed approximate model is obtained by replacing the ReLU and max-pooling functions in the original models with the proposed approximate polynomials. The approximate deep learning model produces almost the same inference results as those of the original ones.}
    \vspace{-0.5cm}
    \label{fig:my_label}
\end{figure}

%In this study, we propose a precise polynomial approximation technique for the ReLU and max-pooling functions for PPML on \md{word-wise} FHE.
\md{Precise approximation requires a polynomial of degree at least about $2^{1.0013\alpha-2.8483}$ for the precision \me{parameter} $\alpha$, which results in $2^{\Theta(\alpha)}$ number of non-scalar multiplications and several numerical issues (see Section 3.3).}
%precision parameter $\alpha$에 대해 최소 $2^{\alpha-2.85}$ 정도의 차수가 필요하다 (See Section 3.3). 예를 들어, ImageNet classification할 때 필요한 precision parameter $\alpha=14$가 필요한데 이를 위해서는 달성하기 위해서는 최소 2000차 이상의 다항식이 필요하다. 이는 상당히 많은 넌스칼라곱을 필요로 하고, 동형암호에서 사용 시 불안정한 계수로 인해 큰 numerical error를 일으킬 수 있다.
%In general, precise approximation requires a very high-degree polynomial, and this high-degree polynomial may cause large numerical errors in FHE due to unstable coefficients (i.e., excessively large or small values). 
Thus, we propose a \md{precise polynomial approximation technique for the ReLU and max-pooling functions that uses} \emph{a composition of minimax approximate polynomials of small degrees}. Hence, we can replace the ReLU and max-pooling functions with the proposed approximate polynomials in well-studied deep learning models such as ResNet \cite{he2016deep}, VGGNet \cite{simonyan2014very}, and GoogLeNet \cite{szegedy2015going}, and the pre-trained parameters \mf{can be} used without retraining. 
%Using the proposed approximation methods, already well-studied deep learning models can be performed on FHE by simply replacing ReLU and max pooling functions with the proposed approximate polynomials. 
Figure \ref{fig:my_label} shows the comparison between the original deep learning model $\mathcal{F}$ and the proposed approximate deep learning model $\mathcal{F}^\alpha$ for the precision parameter $\alpha$, where $\mathcal{F}$ is the function that outputs the inference result for an input $\bb{x}$.

We prove that the two inference results, $\mathcal{F}^\alpha (\bb{x})$ and $\mathcal{F}(\bb{x})$, satisfy $||\mathcal{F}^\alpha (\bb{x}) - \mathcal{F}(\bb{x})||_\infty \leq C 2^{-\alpha}$ for a constant $C$, \mg{which can be determined independently of the precision parameter $\alpha$} (see Theorem 4). Theoretically, if we increase the precision parameter $\alpha$ enough, the two inference results \mf{would be identical}. The numerical evaluation also supports that the two inference results are very close for an appropriate $\alpha$.

We classify the CIFAR-10 without retraining using well-studied deep learning models \mf{such as ResNet and VGGNet}, and pre-trained parameters.
%Most of the previous studies designed new deep learning models to classify MNIST or CIFAR-10 datasets. In addition, they had to perform retraining for many epochs. However, this work allows us to classify large datasets such as CIFAR-10 without retraining using already well-studied deep learning models and pre-trained parameters. 
\mf{We can also further reduce the required precision parameter $\alpha$ by allowing} \me{retraining for only a few epochs.}
%In some cases, it is desirable to allow the approximation error to grow somewhat larger to use approximate polynomials that require smaller depth and number of non-scalar multiplications. In this case, it is also possible to classify CIFAR-10 with high accuracy using only a few epoch retraining, which is different from the previous studies that require many epoch training.
For classifying the ImageNet \cite{russakovsky2015imagenet} \me{in} ResNet-152, we approximate the ReLU and max-pooling functions using the composition of minimax approximate polynomials of degrees 15, 27, and 29. Then, \md{for the first time for PPML using word-wise HE}, we successfully classify the ImageNet with 77.52\% accuracy, which is very close to the original model accuracy of 78.31\%. The proposed approximate polynomials are generally applicable to various deep learning models such as ResNet, VGGNet, and GoogLeNet for PPML on FHE. 

%On Resnet-20, a representative CNN network, we classify CIFAR-10 dataset with high accuracy when we use the proposed approximate polynomial instead of ReLU function. %Furthermore, we confirm that the proposed approximate polynomials are generally applicable to various large-depth deep learning models such as VGGNet. We think that the proposed approximate polynomials will serve as an important foundation for future homomorphic encryption applications to advanced large-depth deep learning models.

%대표적인 CNN 네트워크인 Resnet-20에서 activation function ReLU 대신 우리가 제안한 다항식을 사용하였을 때 CIFAR-10 데이터셋을 높은 정확도로 분류하는데 성공하였다. 또한, 이 방법은 ResNet 20 뿐 아니라 깊은 ResNet 및 VGGNet 등 다양하고 깊은 depth의 모델에 범용적으로 적용 가능함을 확인하였다. 우리는 우리가 제안한 activation function 및 분석이 향후 advanced large-depth PPML model로의 동형암호 응용을 위한 중요한 기초가 될 것으로 생각한다. 

\subsection{Related works}
Deep learning on the homomorphically encrypted data, i.e., PPML using HE, can be classified into two categories: One \me{uses HE only}, and the other uses HE and MPC techniques together. Gazelle \cite{gazelle}, Cheetah \cite{cheetah}, and nGraph-HE \cite{ngraph} are representative studies that perform deep learning using HE and MPC techniques together. They used garbled circuits to perform activation functions through communication with clients and servers. However, the use of MPC techniques can lead to high communication costs or \md{disclose information on the configuration of the deep learning models.}

\md{PPML using only HE can be classified into PPML using bit-wise HE and PPML using word-wise HE. In \cite{she}, \me{PPML using the fast fully homomorphic encryption over the torus (TFHE) \cite{TFHE}, which is the most promising bit-wise FHE, was} studied. This \mf{PPML} can evaluate the ReLU and max-pooling functions exactly on the encrypted data. \mf{However, leveled HE was used instead of FHE because of the large computational time of FHE operations, which limits performing a very deep neural network on HE.} Our work focuses on deep learning using word-wise FHE \me{such as the Brakerski/Fan-Vercauteren \cite{BFV} and Cheon-Kim-Kim-Song (CKKS) scheme \cite{CKKS}}. CryptoNets \cite{gilad2016cryptonets}, Faster CryptoNets \cite{FasterCryptoNets}, SEALion \cite{sealion}, and studies in \cite{HCNN_GPU, codespy} also performed deep learning using word-wise HE. By designing a new deep learning model and training, the authors in \cite{FasterCryptoNets}, \cite{HCNN_GPU}, and \cite{codespy} classified the CIFAR-10 with an accuracy of about $75\%$, $77\%$, and $91.5\%$, respectively. CryptoNets and SEALion replaced the ReLU function with square function $x^2$, and Faster CryptoNets replaced the ReLU function with quantized minimax approximate polynomial $2^{-3}x^2 + 2^{-1}x + 2^{-2}$.}  %For CIFAR-10 dataset classification, the previous works could not achieve high accuracy with standard networks due to inaccurate polynomial approximations. Thus, they adopted new deep learning models, which limits the use of pre-trained standard models. In addition, ImageNet dataset has not yet been successfully classified even with training for many epochs.

\section{Preliminaries}
\subsection{Minimax composite polynomial}
Unfortunately, FHE schemes do not support non-arithmetic operations such as sign function, where $\sgn(x) = x/|x|$ for $x \neq 0$, and $0$, otherwise. Hence, instead of the \md{actual} sign function, a polynomial \mg{approximating} the sign function should be evaluated on the encrypted data.  Recently, authors in \cite{leeminimax} proposed a composite polynomial named \emph{minimax composite polynomial} that optimally approximates the sign function with respect to the depth consumption and the number of non-scalar multiplications. The definition of \emph{minimax composite polynomial} is given as follows:

\begin{definition}
Let $D$ be $[-b,-a] \cup [a,b]$. A composite polynomial $p_k \circ p_{k-1} \circ \cdots \circ p_1$ is called a minimax composite polynomial on $D$ for $\{d_i\}_{1 \leq i \leq k}$ if the followings are satisfied:  
\begin{itemize}
\item $p_1$ is the minimax approximate polynomial of degree at most $d_1$ on $D$ for $\mathrm{sgn}(x)$.
\item For $2 \leq i \leq k$, $p_i$ is the minimax approximate polynomial of degree at most $d_i$ on $p_{i-1} \circ p_{i-2} \circ \cdots \circ p_1 (D)$ for $\mathrm{sgn}(x)$.
\end{itemize}
\end{definition}

\mg{For $\epsilon, 0<\epsilon<1$ and $\beta>0$, a polynomial $p(x)$ that approximates $\sgn(x)$ is called $(\beta,\epsilon)$-close if it satisfies $|p(x) - \text{sgn}(x)| \leq 2^{-\beta}$ for $x \in [-1,-\epsilon]\cup[\epsilon,1]$ \cite{compg}. }Then, the optimal $(\beta,\epsilon)$-close composite polynomial $p(x)$ is the minimax composite polynomial on $[-1,-\epsilon]\cup[\epsilon,1]$ for $\{d_i\}_{1 \leq i \leq k}$, where $\{d_i\}_{1 \leq i \leq k}$ is obtained from Algorithm 5 for inputs $\beta$ and $\epsilon$ in \cite{leeminimax}.

\subsection{Homomorphic max function}\label{sec:homomorphic_max_function}

To perform the max function on the encrypted data, called the homomorphic max function, a polynomial that approximates the max function should be evaluated on the encrypted data instead of the max function. The approximate polynomial of the max function, $m(a,b)$, should satisfy the following \mf{inequality} for the precision parameter $\alpha>0$:
\begin{equation}\label{max_error_condition}
|m(a,b) - \max(a,b)| \leq 2^{-\alpha} ~~~ \mbox{for }a,b \in [0,1].
\end{equation}

In \cite{compg}, considering $\max(a,b) = \frac{(a+b)+(a-b) \sgn(a-b)}{2}$, they obtained a polynomial $p(x)$ that approximates $\sgn(x)$ and then used it to determine a polynomial $m(a,b)$ that approximates $\max(a,b)$, that is, 
\begin{equation}\label{approx_maxsgn}
m(a,b) = \frac{(a+b)+(a-b) p(a-b)}{2}.
\end{equation}

\begin{wraptable}{R}{6cm}
\vspace{-0cm}
\centering
\caption{The optimized max function factor, degrees of the component polynomials, and depth consumption for evaluating $m_\alpha(a,b)$ according to the precision parameter $\alpha$ \cite{leeminimax}. }
\label{maxfactor}
\footnotesize{
\begin{tabular}{c|ccc}
\toprule
$\alpha$ & $\zeta_\alpha$ & degrees & depth \\ \hline
4        & 5                           & \{5\}                                                              & 4     \\
5        & 5                           & \{13\}                                                             & 5     \\
6        & 10                          & \{3,7\}                                                            & 6     \\
7        & 11                          & \{7,7\}                                                            & 7     \\
8        & 12                          & \{7,15\}                                                           & 8     \\
9        & 13                          & \{15,15\}                                                          & 9     \\
10       & 13                          & \{7,7,13\}                                                         & 11    \\
11       & 15                          & \{7,7,27\}                                                         & 12    \\
12       & 15                          & \{7,15,27\}                                                        & 13    \\
13       & 16                          & \{15,15,27\}                                                       & 14    \\
14       & 17                          & \{15,27,29\}                                                       & 15   \\
\bottomrule
\end{tabular}
}
\vspace{-0.8cm}
\end{wraptable}

In \cite{leeminimax}, \mg{the authors proposed a minimax composite polynomial $p(x)$ approximating the sign function. Then, they proposed a polynomial} $m(a,b)$ approximating the max function using $p(x)$ and equation (\ref{approx_maxsgn}).
%In \cite{leeminimax}, a method to approximate the sign function using a composition of minimax approximate polynomials was proposed and proved to be the optimal approximation method for depth consumption and the number of non-scalar multiplications. A polynomial that approximates max function using this optimal approximate polynomial of sign function was proposed in \cite{leeminimax}, which has the best performance for homomorphic max function up to now.
Specifically, for a precision parameter $\alpha$ and \mg{the} optimized max function factor $\zeta_\alpha >0$, an $(\alpha-1,\zeta_\alpha \cdot 2^{-\alpha})$-close minimax composite polynomial $p_\alpha(x) = p_{\alpha,k} \circ \cdots \circ p_{\alpha,1}(x)$ \mf{was} obtained by using the optimal degrees obtained from Algorithm 5 in \cite{leeminimax}, \mg{where the depth consumption is minimized.} Then, it was shown that $m_\alpha(a,b) = \frac{(a+b)+(a-b) p_\alpha(a-b)}{2}$ from $p_\alpha(x)$ satisfies the error condition in equation (\ref{max_error_condition}). Table \ref{maxfactor} lists the values of $\zeta_\alpha$, corresponding depth consumption for evaluating $m_\alpha(a,b)$, and degrees of the component polynomials $p_{\alpha,1}, \cdots, p_{\alpha,k}$ according to $\alpha$. \mf{This approximate polynomial $m_{\alpha}(a,b)$ has the best performance up to now.}

\section{Precise polynomial approximation of the ReLU and max-pooling functions}
It is possible to approximate the ReLU and max-pooling functions by Taylor polynomials or minimax approximate polynomials \cite{Remez}. \mg{However, these approximate polynomials with minimal approximation errors require many non-scalar multiplications.} For this reason, we approximate the ReLU and max-pooling functions using $p_\alpha(x)$ and $m_\alpha(a,b)$, \mf{which are defined in Section \ref{sec:homomorphic_max_function}}. \mg{Although we only deal with the ReLU function among many activation functions, the proposed approximation method can also be applied to other activation functions such as the leaky ReLU function.}

\subsection{Precise approximate polynomial of the ReLU function}
In this subsection, we propose a polynomial that precisely approximates the ReLU function, a key component of deep neural networks. For a given precision parameter $\alpha$, it is required that the approximate polynomial of the ReLU function, $r(x)$, satisfies the following error condition:
%우리는 주어진 Precision parameter $\alpha$에 대해 ReLU function 근사 다항식 $r(x)$이 다음 에러 조건을 만족시킬 것을 요구한다.
\begin{equation}\label{ReLU_error_condition}
|r(x) - \ReLU(x)| \leq 2^{-\alpha} ~~~ \mbox{for }x \in [-1,1].
\end{equation}

\begin{figure}[b]%
\centering
\begin{minipage}[b]{.2\textwidth}%
\subfloat[]{\includegraphics[width=\linewidth]{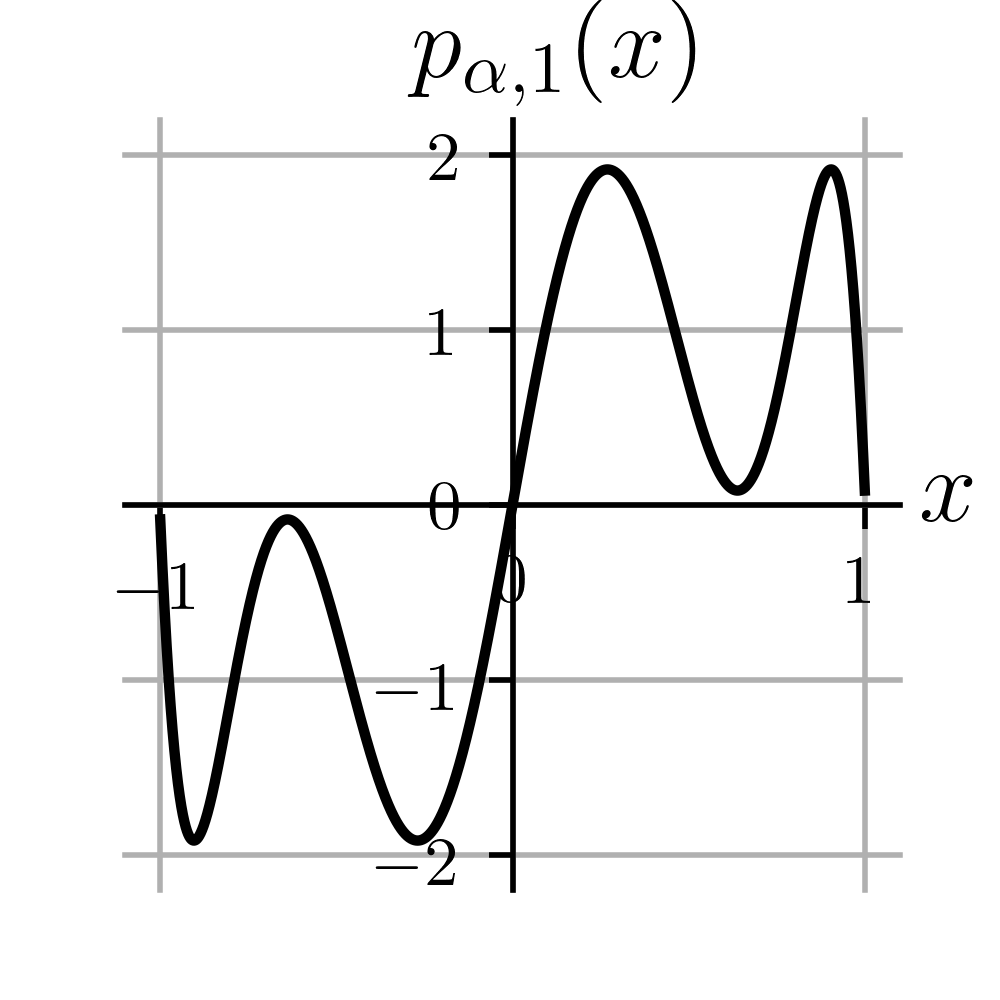}}
\end{minipage}%
\begin{minipage}[b]{.2\textwidth}%
\subfloat[]{\includegraphics[width=\linewidth]{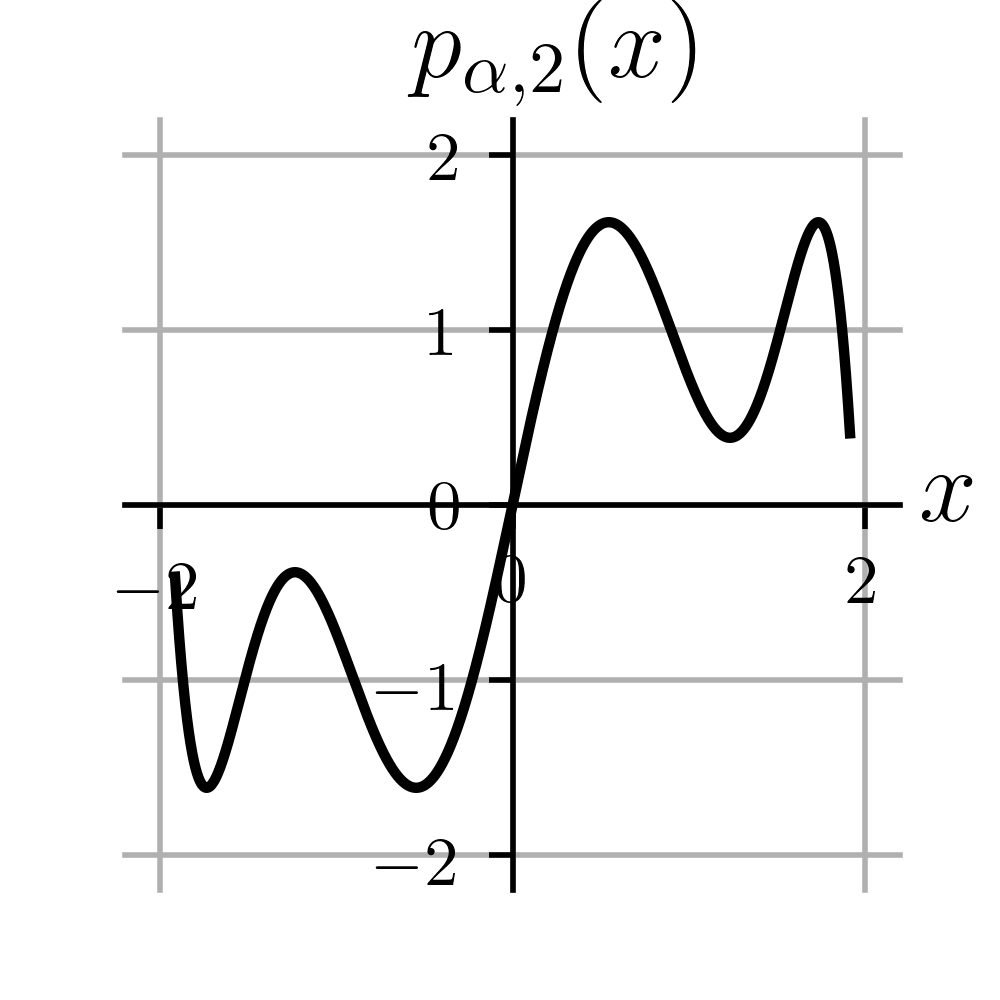}}
\end{minipage}%
\begin{minipage}[b]{.2\textwidth}%
\subfloat[]{\includegraphics[width=\linewidth]{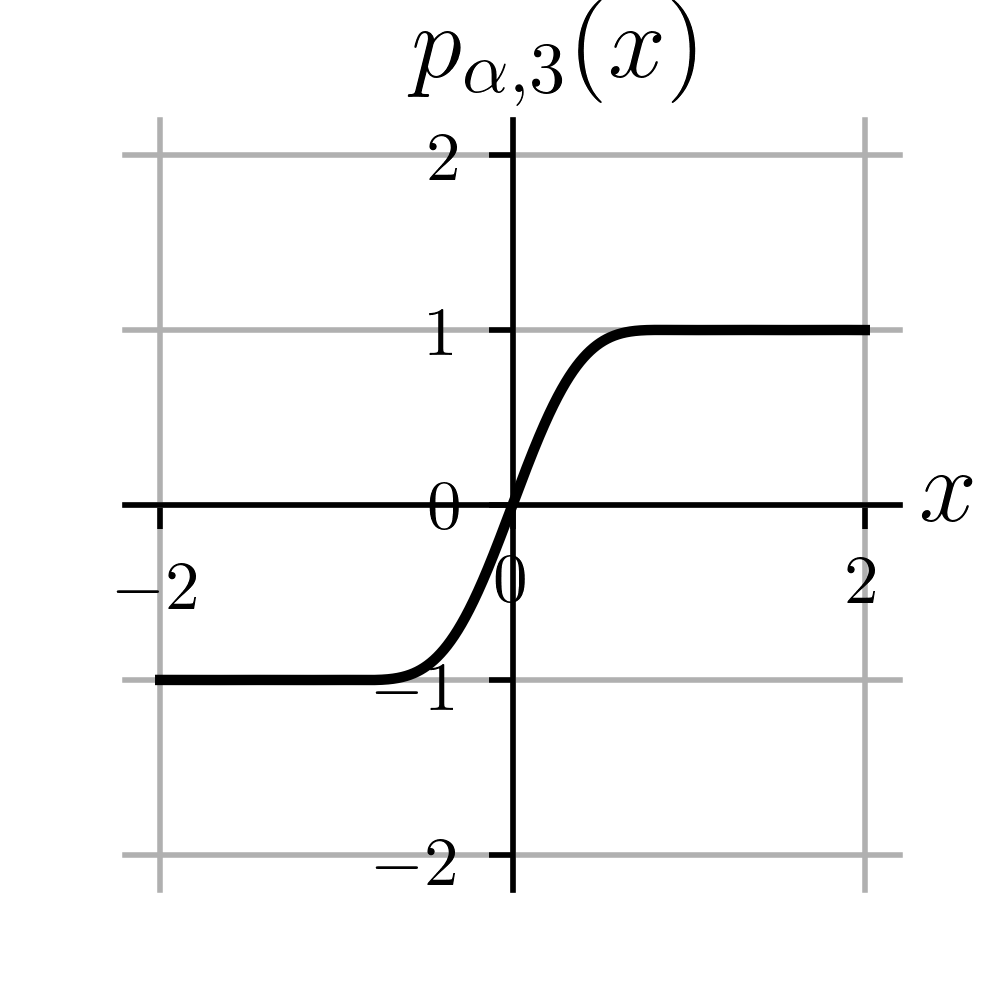}}
\end{minipage}%
\begin{minipage}[b]{.2\textwidth}%
\subfloat[]{\includegraphics[width=\linewidth]{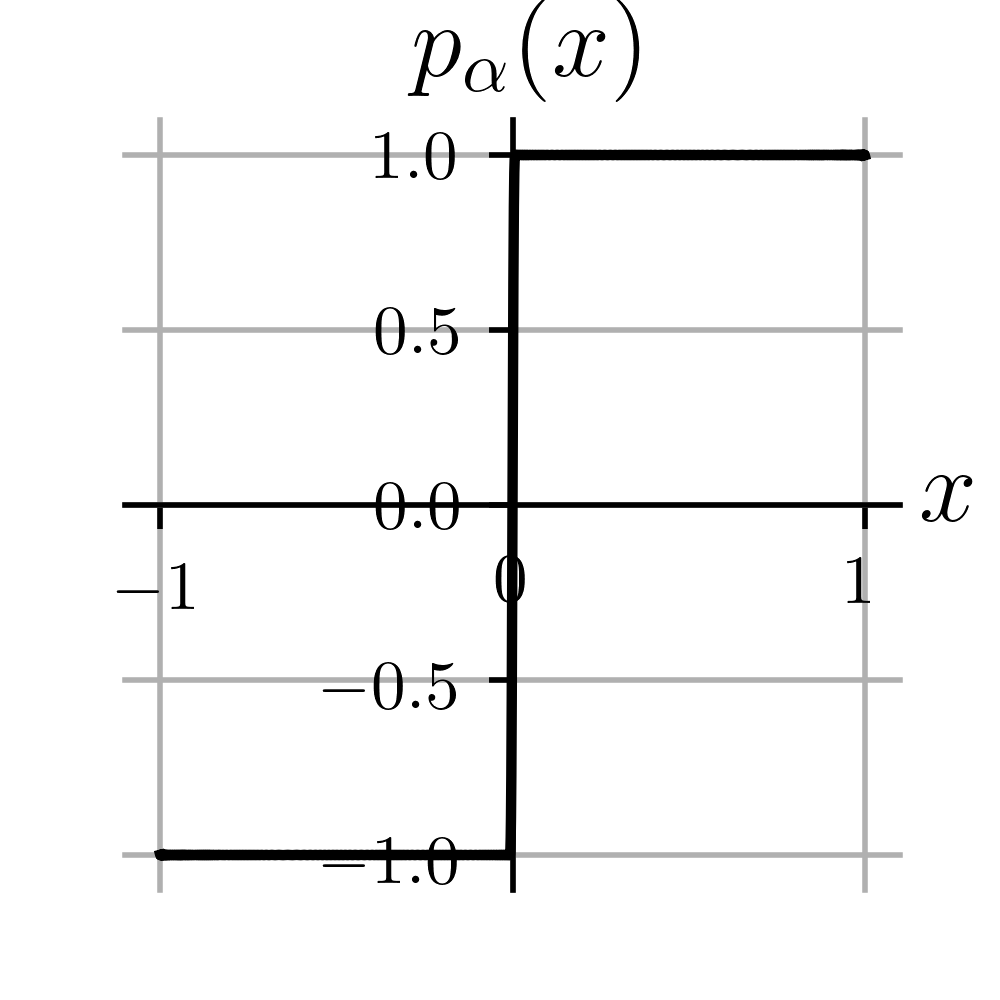}}
\end{minipage}%
\begin{minipage}[b]{.2\textwidth}%
\subfloat[]{\includegraphics[width=\linewidth]{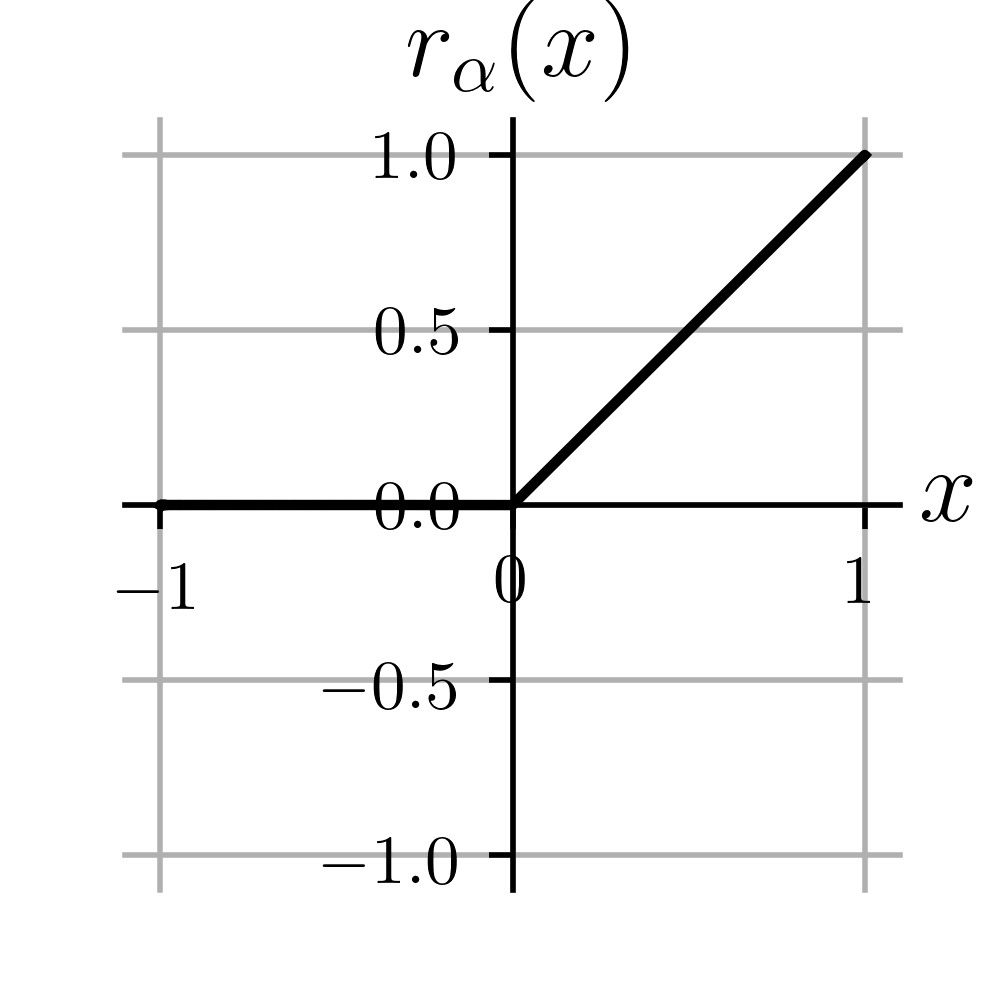}}
\end{minipage}%
\caption{The component polynomials of $p_\alpha (x)$ and the proposed $r_\alpha(x)$ with precision parameter $\alpha=11$. (a): the first component polynomial $p_{\alpha,1}(x)$, (b): the second component polynomial $p_{\alpha,2}(x)$, (c): the third component polynomial $p_{\alpha,3}(x)$, (d): $p_\alpha(x) = p_{\alpha,3} \circ p_{\alpha,2} \circ p_{\alpha,1}(x)$, the composition of $p_{\alpha,1}(x)$, $p_{\alpha,2}(x)$, and $p_{\alpha,3}(x)$, and (e): $r_\alpha (x) = \frac{x+xp_\alpha(x)}{2}$, which is the approximate polynomial of the ReLU function. The degrees of $p_{\alpha,1}(x)$, $p_{\alpha,2}(x)$, and $p_{\alpha,3}(x)$ are 7, 7, and 27, respectively.}
\label{fig:r_alpha}
\end{figure}

\mg{It is well-known that $\ReLU(x) = \frac{x+x\sgn(x)}{2}$. Then, we propose the following approximate polynomial of the ReLU function, $r_{\alpha}(x)= \frac{x+x p_\alpha(x)}{2}$. The specific coefficient values of $r_\alpha(x)$ for several $\alpha$ can be seen in Appendix A. Figure \ref{fig:r_alpha} shows the component polynomials of $p_\alpha (x)$ and the proposed $r_\alpha(x)$ with the precision parameter $\alpha=11$. The following theorem shows that the proposed $r_\alpha(x)$ satisfies the error condition of the ReLU function in equation (\ref{ReLU_error_condition}).}

%우리는 $\ReLU(x) = \frac{x+x\sgn(x)}{2}$ 관계식에 주목하면서 2.2절에서 제안된 $p_\alpha(x)$를 사용하여 $r_\alpha(x) = \frac{x+x p_\alpha(x)}{2}$을 ReLU function의 근사다항식으로 제안한다. 이 때, 다음 정리는 $r_\alpha(x)$가 ReLU function 에러 조건을 만족한다는 것을 보여준다. 

\begin{theorem}\label{ReLU_thm}
For any $\alpha>0$, $r_\alpha(x)$ satisfies the following inequality:
%임의의 $\alpha>0$에 대해, $r_\alpha$은 다음 부등식을 만족한다.
\begin{equation*}
|r_\alpha(x) - \ReLU(x)| \leq 2^{-\alpha} ~~~ \mbox{for }x \in [-1,1].
\end{equation*}
\end{theorem}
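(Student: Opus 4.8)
The plan is to exploit that $r_\alpha$ is assembled from $p_\alpha$ in exactly the same way that the homomorphic max polynomial $m_\alpha$ is, so that the ReLU error collapses to a quantity already controlled in Section \ref{sec:homomorphic_max_function}. Starting from $\ReLU(x) = \frac{x + x\sgn(x)}{2}$ and the definition $r_\alpha(x) = \frac{x + x p_\alpha(x)}{2}$, I would first compute
\begin{equation*}
r_\alpha(x) - \ReLU(x) = \frac{x\left(p_\alpha(x) - \sgn(x)\right)}{2},
\end{equation*}
so that $|r_\alpha(x) - \ReLU(x)| = \frac{|x|\,|p_\alpha(x) - \sgn(x)|}{2}$ for every $x \in [-1,1]$. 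The key observation is that this is precisely the expression appearing in the max analysis: since $m_\alpha(a,b) - \max(a,b) = \frac{(a-b)\left(p_\alpha(a-b) - \sgn(a-b)\right)}{2}$, and every $x \in [-1,1]$ can be written as $x = a - b$ for some $a,b \in [0,1]$ (take $a = \max(x,0)$ and $b = \max(-x,0)$), the ReLU error at $x$ equals the max error at such a pair $(a,b)$. Invoking the established bound $|m_\alpha(a,b) - \max(a,b)| \leq 2^{-\alpha}$ for $a,b \in [0,1]$ then yields the claim immediately. This reduction is the route I would take first, as it reuses the work already done for $m_\alpha$.

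If a self-contained argument is preferred, I would instead bound $\frac{|x|\,|p_\alpha(x) - \sgn(x)|}{2}$ directly by splitting $[-1,1]$ into an outer and an inner region determined by the closeness parameter. Since $p_\alpha$ is $(\alpha-1,\,\zeta_\alpha 2^{-\alpha})$-close, for $|x| \geq \zeta_\alpha 2^{-\alpha}$ we have $|p_\alpha(x) - \sgn(x)| \leq 2^{-(\alpha-1)}$; combined with $|x| \leq 1$ this gives $\frac{|x|\,|p_\alpha(x) - \sgn(x)|}{2} \leq \frac{2^{-(\alpha-1)}}{2} = 2^{-\alpha}$, so the outer region is routine.

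The main obstacle is the inner region $|x| < \zeta_\alpha 2^{-\alpha}$, where the closeness property gives no useful bound on $|p_\alpha(x) - \sgn(x)|$ (indeed $p_\alpha(0) = 0$, so this quantity is near $1$ close to the origin). Here the argument must trade the smallness of the factor $|x|$ against the growth of $1 - p_\alpha(x)$ for $x > 0$, and symmetrically for $x < 0$ using the oddness of $p_\alpha$; this is exactly the balance that the optimized factor $\zeta_\alpha$ in Table \ref{maxfactor} is chosen to guarantee. Concretely, I expect to need a bound of the form $|x|\,(1 - p_\alpha(x)) \leq 2^{1-\alpha}$ on $(0, \zeta_\alpha 2^{-\alpha})$, following from the monotonicity of the component polynomials near zero together with the defining property of $\zeta_\alpha$ for the max function. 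This inner region is where all the real content sits; the cleanest presentation, however, simply cites the already-proven max bound and inherits that content for free.
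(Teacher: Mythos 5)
Your primary argument is correct and is essentially the paper's own proof: both reduce the ReLU error to the already-established bound $|m_\alpha(a,b)-\max(a,b)|\leq 2^{-\alpha}$ by choosing $a=\max(x,0)$, $b=\max(-x,0)$ (the paper writes this as the two cases $m_\alpha(x,0)$ for $x\geq 0$ and $m_\alpha(0,-x)$ for $x\leq 0$). The self-contained alternative you sketch is unnecessary; the clean reduction you give first is exactly what the paper does.
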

\mg{The proofs of all theorems and a lemma are given in Appendix B.}

\paragraph{Approximation range}
\mg{The value of polynomial $r_\alpha(x)$ is close to the ReLU function value only for $x \in [-1,1]$. However, since some input values of the ReLU function during deep learning are not in $[-1,1]$, the approximation range should be extended from $[-1,1]$ to a larger range $[-B,B]$ for some $B>1$. Thus, we propose a polynomial $\tilde{r}_{\alpha,B}(x) := Br_\alpha(x/B)$ approximating the ReLU function in the range $[-B,B]$. Then, we have $|\tilde{r}_{\alpha,B}(x) - \ReLU(x)|=|B\tilde{r}_{\alpha,B}(x/B) - \ReLU(x)|=|B\tilde{r}_{\alpha,B}(x/B) - B\ReLU(x/B)|=|B(\tilde{r}_{\alpha,B}(x/B) - \ReLU(x/B))|\leq B 2^{-\alpha}$ for $x \in [-B,\:B]$. Large $B$ has the advantage of enlarging input ranges of the ReLU function at the cost of causing a} larger approximation error with the ReLU function. %In this paper, we find the smallest $B$ such that all the inputs to ReLU function are included in $[-B,B]$ and perform classification on the encrypted data.

\subsection{Precise approximate polynomial of \mg{the} max-pooling function}
The max-pooling function with kernel size $k \times k$ outputs the maximum value among $k^2$ input values.
To implement the max-pooling function for the homomorphically encrypted data, we should find a polynomial that approximates the max-pooling function. %Specifically, we have to find a polynomial that outputs the approximate value of the maximum value among $k^2$ input values. 

We define a polynomial $M_{\alpha,n}(x_1,\cdots,x_n)$ that approximates the maximum value function $\max(x_1,\cdots,x_n)$ using $m_\alpha(a,b)$. To reduce the depth consumption of the approximate polynomial of the max-pooling function, we construct $M_{\alpha,n}$ by compositing polynomial $m_\alpha(a,b)$ with a minimal number of compositions. We use the following recursion equation to obtain the approximated maximum value among $x_1,\cdots,x_n$:
\begin{equation}
\label{recursion}
M_{\alpha,n}(x_1,\cdots,x_n) =   \begin{cases}
    x_1, & n=1\\
    m_\alpha(M_{\alpha,k}(x_1,\cdots,x_k),M_{\alpha,k}(x_{k+1},\cdots,x_{2k})), & n=2k \\
    m_\alpha(M_{\alpha,k}(x_1,\cdots,x_k),M_{\alpha,k+1}(x_{k+1},\cdots,x_{2k+1})), & n=2k+1.
  \end{cases}
\end{equation}

%For any given $B>0$, we can precisely approximate the maximum function on the interval $[-B,B]$. 
The following theorem shows the upper bound of approximation error of  $M_{\alpha,n}(x_1,\cdots,x_n)$.
\begin{theorem}\label{maxpooling_thm}
For given $\alpha>0$ and $n \in \mathbb{N}$, the polynomial $M_{\alpha,n}(x_1,\cdots,x_n)$ obtained from the recursion equation in (\ref{recursion}) satisfies
\begin{equation}
\label{maxpool_ineq}
\begin{aligned}[b]
|M_{\alpha,n}(x_1 , \cdots,x_{n}) -& \max(x_1 , \cdots,x_{n})| \leq 2^{-\alpha} \lceil \log_2 n \rceil \\
&\mathrm{for}\:x_1,\cdots,x_n \in [(\lceil \log_2 n \rceil-1)2^{-\alpha},1-(\lceil \log_2 n \rceil-1)2^{-\alpha}].
\end{aligned}
\end{equation}
\end{theorem}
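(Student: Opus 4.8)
The plan is to proceed by strong induction on $n$, or equivalently on the recursion depth $L := \lceil \log_2 n \rceil$, since the recursion in (\ref{recursion}) builds $M_{\alpha,n}$ as a balanced binary tree of pairwise applications of $m_\alpha$. The base case is $n=1$ (so $L=0$): here $M_{\alpha,1}(x_1)=x_1=\max(x_1)$, so the error is exactly $0 = 2^{-\alpha}\cdot 0$, and the claim holds trivially for any input.

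For the inductive step, I would split $n \ge 2$ as in (\ref{recursion}) into two subproblems of sizes $\lfloor n/2\rfloor$ and $\lceil n/2\rceil$. The first structural fact I need is that both halves have strictly smaller depth: $\lceil \log_2 \lceil n/2\rceil\rceil \le L-1$, and likewise for $\lfloor n/2\rfloor$. This follows because $2^{L-1} < n \le 2^L$ forces $\lceil n/2\rceil \le 2^{L-1}$. Writing $\max^{(1)}, \max^{(2)}$ for the true maxima of the two halves and $M^{(1)}, M^{(2)}$ for the corresponding $M_{\alpha,\cdot}$ outputs, the induction hypothesis applies, because the input range $[(L-1)2^{-\alpha},\,1-(L-1)2^{-\alpha}]$ is contained in the wider range required at any depth $\le L-1$, and it yields $|M^{(j)} - \max^{(j)}| \le 2^{-\alpha}(L-1)$ for $j=1,2$.

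The crux of the argument, and the step I expect to be the main obstacle, is the range bookkeeping needed to legally apply $m_\alpha$ at the root, since (\ref{max_error_condition}) only guarantees $|m_\alpha(a,b)-\max(a,b)|\le 2^{-\alpha}$ for $a,b\in[0,1]$. Here the shrinking domain in the statement does exactly the right work: since each $\max^{(j)}$ is a maximum of original inputs it lies in $[(L-1)2^{-\alpha},\,1-(L-1)2^{-\alpha}]$, and combining this with the inductive error bound $2^{-\alpha}(L-1)$ shows $M^{(j)} = \max^{(j)} \pm 2^{-\alpha}(L-1) \in [0,1]$. Thus the margin $(L-1)2^{-\alpha}$ baked into the hypothesis is precisely the accumulated error budget, and it is consumed exactly to keep the subtree outputs inside $[0,1]$ so that $m_\alpha$ is applicable.

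With $M^{(1)},M^{(2)}\in[0,1]$ secured, I would finish by a two-term triangle inequality using the identity $\max(x_1,\dots,x_n)=\max(\max^{(1)},\max^{(2)})$: bound $|m_\alpha(M^{(1)},M^{(2)}) - \max(M^{(1)},M^{(2)})| \le 2^{-\alpha}$ by (\ref{max_error_condition}), and bound $|\max(M^{(1)},M^{(2)}) - \max(\max^{(1)},\max^{(2)})| \le \max(|M^{(1)}-\max^{(1)}|,\,|M^{(2)}-\max^{(2)}|) \le 2^{-\alpha}(L-1)$ using the fact that $\max$ is $1$-Lipschitz in the sup norm. Adding these gives $2^{-\alpha} + 2^{-\alpha}(L-1) = 2^{-\alpha}L = 2^{-\alpha}\lceil\log_2 n\rceil$, closing the induction.
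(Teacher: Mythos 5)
Your proof is correct and follows the same overall skeleton as the paper's: strong induction on $n$ along the split in (\ref{recursion}), followed at the root by the triangle inequality $|m_\alpha(\tilde P,\tilde Q)-\max(P,Q)|\leq|m_\alpha(\tilde P,\tilde Q)-\max(\tilde P,\tilde Q)|+|\max(\tilde P,\tilde Q)-\max(P,Q)|$ together with the $1$-Lipschitz property of $\max$ (the paper isolates this as its Lemma~(b)). The one place where you genuinely diverge is the range bookkeeping needed to invoke (\ref{max_error_condition}) at the root. The paper proves a separate auxiliary Lemma~(a) by its own induction, showing that $M_{\alpha,k}$ maps $[A,B]^k$ into $[A-\lceil\log_2 k\rceil 2^{-\alpha},\,B+\lceil\log_2 k\rceil 2^{-\alpha}]$, and then uses it to place $\tilde P,\tilde Q$ in $[0,1]$. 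You instead observe that the subtree maxima are maxima of original inputs and hence already lie in $[(L-1)2^{-\alpha},\,1-(L-1)2^{-\alpha}]$, so the inductive error bound $2^{-\alpha}(L-1)$ alone forces the subtree outputs into $[0,1]$. This is a clean simplification: it eliminates an entire auxiliary induction and makes explicit that the shrinking domain in the statement is exactly the accumulated error budget. The paper's Lemma~(a) is slightly more general (it tracks the output interval for arbitrary $[A,B]$, not just the containment in $[0,1]$), but that extra generality is not needed for the theorem. Your identification of $\lceil\log_2\lceil n/2\rceil\rceil\leq L-1$ and the containment of the stated input range in the wider range required by the inductive hypothesis are both correct, so the argument closes without gaps.
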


\paragraph{Approximation range}
To extend the approximation range $[(\lceil \log_2 n \rceil-1)2^{-\alpha},1-(\lceil \log_2 n \rceil-1)2^{-\alpha}]$ to $[-B,B]$ for $B>1$, we use the following polynomial as an approximation polynomial:
\begin{equation*}
    \tilde{M}_{\alpha,n,B}(x_1,\cdots,x_n) = B'\cdot (M_{\alpha,n}(\frac{x_1}{B'}+0.5,\cdots,\frac{x_n}{B'}+0.5)-0.5),
\end{equation*}
where $B' = B/(0.5-(\lceil \log_2 n \rceil-1)2^{-\alpha})$. Then, we have 
\[\left| \tilde{M}_{\alpha,n,B}(x_1,\cdots,x_n) - \max(x_1,\cdots,x_n) \right| \leq B' \cdot 2^{-\alpha}\lceil \log_2 n \rceil,\]
for $x_1,\cdots,x_n \in [-B,B]$, which directly follows from Theorem \ref{maxpooling_thm}. 

\subsection{Comparison between the minimax approximate polynomial and the proposed approximate polynomial for the ReLU and max-pooling functions}

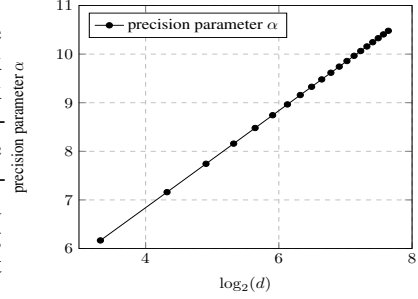
\begin{wrapfigure}{R}{0.4\textwidth}
\vspace{-0cm}
\centering
\resizebox{5cm}{4cm}{%
\hspace*{-1cm}
\begin{tikzpicture}
\begin{axis}[
    xlabel = {$\log_2(d)$},
    ylabel = {precision parameter $\alpha$},
    xmin = 3, xmax = 8,
    ymin = 6, ymax = 11,
    xtick = {2,4,6,8},
    ytick = {6,7,8,9,10,11},
    legend pos = north west,
    ymajorgrids = true,
    xmajorgrids = true,
    grid style = dashed,
]
\addplot[
    mark = *,
    mark size = 1.8pt
]
coordinates {
(3.321928095,6.166431755)
(4.321928095,7.159808652)
(4.906890596,7.743522708)
(5.321928095,8.158121562)
(5.64385619,8.479846327)
(5.906890596,8.742770202)
(6.129283017,8.96509595)
(6.321928095,9.157697743)
(6.491853096,9.327593064)
(6.64385619,9.479574924)
(6.781359714,9.617062736)
(6.906890596,9.742581668)
(7.022367813,9.858049585)
(7.129283017,9.964957408)
(7.22881869,10.06448713)
(7.321928095,10.15759166)
(7.409390936,10.24505046)
(7.491853096,10.32750924)
(7.569855608,10.40550888)
(7.64385619,10.47950702)

};
\addlegendentry{precision parameter $\alpha$}
\end{axis}
\end{tikzpicture}
}
\caption{The achieved precision parameter $\alpha$ according to the logarithm of the polynomial degree (base two).
}
\label{precision_parameter_deg}
\vspace{-0.3cm}
\end{wrapfigure}

One \mg{can think that precise approximation of the ReLU function can be achieved simply by using a high-degree polynomial such as a Taylor polynomial and minimax approximate polynomial \cite{Remez}. In particular, the minimax approximate polynomials have the minimum minimax error for a given degree. Thus, the minimax approximate polynomials are good candidates for the approximate polynomials.} In this subsection, we compare the proposed approximation method using $r_\alpha (x)$ \me{and the method using the minimax approximate polynomial}. 
\mf{For simplicity,} let the approximation range of the ReLU function be $[-1,1]$. 
Then, for a given degree $d$, the minimax approximate polynomial of degree at most $d$ on $[-1,1]$ for $\ReLU(x)$ can be obtained from \mg{the} Remez algorithm \cite{Remez}. \me{The precision parameter $\alpha$ corresponds to the logarithm of the minimax error} using base two. Figure \ref{precision_parameter_deg} presents the \me{corresponding precision parameter} $\alpha$ according to the logarithm of the polynomial degree (base two). \me{Because performing \mg{the} Remez algorithm for a high-degree polynomial has several difficulties\mg{,} such as \mg{using} high real number precision and finding extreme points with high precision}, we obtain the polynomials \me{for} only up to degree $200$. It can be seen that $\log_2 (d)$ and $\alpha$ have an almost exact linear relationship, and we obtain the following regression equation:
%이 subsection에서는 단순히 minimax approximate polynomial (가장 적은 차수로 정해진 precision을 달성할 수 있는 다항식)을 사용하여 ReLU function을 근사하는 방법과 제안하는 근사다항식을 사용하는 방법을 비교한다. Max function의 근사의 경우는 ReLU function의 근사와 매우 비슷하다. ReLU function의 근사 범위를 $[-1,1]$라 하자. Remz 알고리즘 \cite{Remez}을 통해 minimax approximate polynomial을 구할 수 있는데, 이 때, 차수 $d$를 높일수록 점점 더 minimax error는 작아진다. 그림 \ref{precision_parameter_deg}는 Remez 알고리즘을 수행하여 주어진 precision parameter alpha에 대해 필요한 차수의 logarithm (base 2) 값을 나타낸다. 우리는 시간 및 실수 정밀도의 문제로 200차까지만 돌릴 수 있었다. 이 때, $\log_2 d$와 $\alpha$는 거의 정확히 linear한 관계를 가짐을 알 수 있고, 다음 추세식을 얻을 수 있다,
\begin{equation}\label{regression}
%\log_2(d) = 1.0013\alpha-2.8483.
\md{\alpha = 0.9987 \log_2(d) + 2.8446}.
\end{equation}

Table \ref{comparison_minimax_proposed} compares \mg{the approximation method using the minimax approximate polynomial and that using the proposed approximate polynomial $r_\alpha (x)$}. The required number of non-scalar multiplications for evaluating a polynomial of degree $d$ is $O(\sqrt{2d})$ \cite{Paterson}, and thus, $2^{\Theta(\alpha)}$ non-scalar multiplications are required \me{for the precision parameter} $\alpha$. \mg{The exact number of non-scalar multiplications is obtained using the odd baby-step giant-step algorithm for polynomials with only odd-degree terms and using the optimized baby-step giant-step algorithm for other polynomials \cite{oddbaby}.}
%We obtain the exact number of non-scalar multiplications using the methods in \cite{oddbaby}, where the odd baby-step giant-step algorithm is used for polynomials of only odd degrees and the optimized baby-step giant-step algorithm, otherwise.
From \md{Table \ref{comparison_minimax_proposed}}, it can be seen that the minimax approximate polynomial of \md{exponentially} high degree is required for a large $\alpha$. This high degree requires many non-scalar multiplications, i.e., large computational time. \me{In addition, using a high-degree polynomial has some numerical issues, such as the requirement of high precision of the real number. Thus, precise approximation using the minimax approximate polynomial is inefficient to use in practice.}

%표 \ref{comparison_minimax_proposed}는 위 추세선을 사용하여 precision alpha에 따라 필요한 minimax polynomial과 제안하는 approximate polynomial을 비교한 것이다. 이 때, 차수별로 필요한 넌스칼라곱 개수는 \cite{oddbaby}에서 제안한 다항식 연산방법을 사용한다. minimax는 optimized 연산방법을 사용할 수 있고, 비교연산의 경우 odd function이므로 odd baby-step giant-step 연산방법을 사용할 수 있다. 이 표를 통해, 큰 $\alpha$를 달성하기 위해서는 매우 높은 차수의 minimax approximate polynomial이 필요한데, the high degree can cause large numerical errors in FHE because of unstable coefficients. In addition, 큰 $\alpha$에 대해서는 제안하는 다항식에 비해 훨씬 많은 넌스칼라곱을 필요로 하며, 즉, 훨씬 많은 계산 시간을 필요로 한다. 

\begin{table}[b]
\caption{Comparison of degrees and number of non-scalar multiplications between the approximation method of using the minimax approximate polynomial and the proposed approximation method. The degrees of the minimax approximate polynomials and the degrees of the component polynomials of $p_\alpha (x)$ are shown. An asterisk(*) indicates that the degree is obtained using the regression equation in (\ref{regression}).}\label{comparison_minimax_proposed}
\centering
\footnotesize{
\begin{tabular}{c|c|c|c|c}
\toprule

\multirow{3}{*}{$\alpha$} & \multicolumn{2}{c|}{minimax}   & \multicolumn{2}{c}{proposed}  \\ 
& \multicolumn{2}{c|}{approx. polynomial}   & \multicolumn{2}{c}{approx. polynomial}  \\ 

\cline{2-5} 
                       & degree & \#mults  & degrees    & \#mults                  \\\hline
6                      & 10                      & 5                          & \{3,7\}                      & 7                          \\
7                      & 20                      & 8                          & \{7,7\}                      & 9                          \\
8                      & 40                      & 12                         & \{7,15\}                     & 12                         \\
9                      & 80                      & 18                         & \{15,15\}                    & 15                         \\
10                     & 150                     & 25                         & \{7,7,13\}                   & 16                         \\
11                     & ~~287*                     & 35                         & \{7,7,27\}                   & 19                         \\
12                     & ~~575*                    & 49                         & \{7,15,27\}                  & 22                         \\
13                     & ~~1151*                   & 70                         & \{15,15,27\}                 & 25                         \\
14                     & ~~2304*                  & 98                         & \{15,27,29\}                 & 28                         \\ 
15 & ~~4612* & 140 & \{29,29,29\} & 30\\

\bottomrule
\end{tabular}
}
\end{table}

\paragraph{\md{Homomorphic} max function}
\me{Considering} $\max(a,b) = \frac{a+b+|a-b|}{2}$, we obtain the minimax approximate polynomial $p(x)$ on $[-1,1]$ for $|x|$. Then, $m(a,b) = \frac{a+b+p(a-b)}{2}$ \me{can be used} as an approximate polynomial of the max function $\max(a,b)$. Since $\ReLU(x) = \frac{x+|x|}{2}$, the approximation of $|x|/2$ and $\ReLU(x)$ \me{are equivalent}. Thus, when the max function is approximated with the minimax approximate polynomial, an exponentially large degree is also required to achieve large $\alpha$.

%식을 만족하므로 minimax approximate polynomial $p(x)$ on $[-1,1]$ for $|x|$를 구한 후, $m(a,b) = \frac{a+b+p(a-b)}{2}$를 근사다항식으로 사용할 수 있다. 그러나, $\ReLU(x) = \frac{x+|x|}{2}$이므로 $|x|/2$를 근사하는 것은 $\ReLU(x)$를 근사하는 것은 같은 minimax error를 발생시킴을 알 수 있다. 따라서, max function을 minimax approximate polynomial로 근사하는 경우에도 큰 $\alpha$를 달성하려면 exponentially large degree가 필요하다.

% Please add the following required packages to your document preamble:
% \usepackage{multirow}

\section{Theoretical performance analysis of approximate deep learning model}\label{sec:theoretical_analysis}
In this section, we propose an \emph{approximate deep learning model} for homomorphically encrypted input data, replacing the original ReLU and max-pooling functions with the proposed \mg{precise} approximate polynomials $\Tilde{r}_{\alpha,B}(x)$ and $\Tilde{M}_{\alpha,n,B}(x_1 ,\cdots,x_n)$ in the original deep learning model.
%Replacing the original ReLU and max-pooling functions with the approximate polynomials in the previous section, we propose the \textit{approximate deep learning model} for homomorphically encrypted input data approximating the original deep learning model.
%We assume that the ReLU and max-pooling functions of the original deep learning model are replaced by the polynomial $\Tilde{r}_{\alpha,B}(x)$ and $\Tilde{M}_{\alpha,n,B}(x_1 ,\cdots,x_n)$, respectively, with the precision parameter $\alpha$ and the approximation range $[-B,B]$.
% Intuitively, we can expect performance degradation when using approximate polynomial instead of ReLU function or max-pooling.
% One can train randomly initialized approximate deep learning model.
% However, there is no guarantee that the retraining approximate deep learning model will show the high performance.
In addition, we analytically derive that if the pre-trained parameters of the original deep learning model are available, 
%we perform the inference
%If pre-trained model is available, then we just plug in our approximate operations to the model and try to inference without training again.
%This section explains that 
classification would be accurate if we adopt a proper precision parameter $\alpha$. 
%Also, we also explain the possibility of compensating the approximation error with retraining the pre-trained parameters when the large $\alpha$ is not available. 

%The possibility of obtaining highly accurate inference results with the same parameters or with a few numbers of retraining is closely linked to the ``difference'' between two models, the original deep learning model and the approximate deep learning model.
%To explain the performance of the approximate deep learning model in detail, we first define some notations.
In more detail, we estimate the L-infinity norm of the difference between the inference results $\mathcal{F}(\bb{x})$ and $\mathcal{F}^\alpha(\bb{x})$, $\|\mathcal{F}^\alpha(\bb{x})-\mathcal{F}(\bb{x})\|_\infty$.
Here, $\mathcal{F}(\bb{x})$ and $\mathcal{F}^\alpha(\bb{x})$ denote the \mf{inference} results of the original deep learning model $\mathcal{F}$ and the proposed approximate deep learning model $\mathcal{F}^\alpha$ with the precision parameter $\alpha$, \mf{respectively. Here, $\mathcal{F}$ and $\mathcal{F}^\alpha$ share the same pre-trained parameters.}
%Let $\mathcal{F}$ denote a function of the original pre-trained deep learning model and $\mathcal{F}^{\alpha}$ be a proposed approximate deep learning model with the approximate ReLU function $\Tilde{r}_\alpha(x)$ and the approximate max-pooling function $\Tilde{M}_{\alpha,n}(x_1 ,\cdots,x_n)$ with the same model parameters as $\mathcal{F}$.
%Also, we fix both the approximation range of ReLU and max-pooling functions $[-1,1]$ in this section for the convenience of description. 
The main conclusion of this section is that we can reduce the difference $\|\mathcal{F}^\alpha(\bb{x})-\mathcal{F}(\bb{x})\|_\infty$ by increasing $\alpha$, which implies that the proposed model shows almost the same performance as the original pre-trained model.
%, which is valid even if the approximation ranges are extended to $[-B,B]$ for any given $B>1$.

%When the deep learning model becomes deeper, it is hard to analyze an inference result. 
%So we first focus on intrinsic features of each single operation contained in the model.
To estimate $\|\mathcal{F}^\alpha(\bb{x})-\mathcal{F}(\bb{x})\|_\infty$, we attempt to decompose the original deep learning model, analyze each component, and finally re-combine them.
First, we decompose given deep learning model $\mathcal{F}$ as $A_n \circ A_{n-1} \circ \cdots \circ A_0$, \mf{where $A_i$ is called a \emph{block}}. 
\mf{Each block can be a simple function: convolution, batch normalization, activation function, and pooling function. It can} also be a mixed operation that cannot be decomposed into simple operations.
Here, all of the inputs and outputs of block $A_i$'s are considered as \mf{a one-dimensional} vector.
Also, we denote $A_i^\alpha$ as the corresponding approximate polynomial operation of each block $A_i$ with precision parameter $\alpha$.
%We approximate by approximate polynomials.
%We write approximate block of $A_i$ by polynomial with precision parameter $\alpha$ as $A_i^\alpha$.
%For example, if a block $A_i$ does not contain the non-arithemetic function, such as convolution, then we denote $A_i^\alpha = A_i$.
%Otherwise...
If block $A_i$ contains only a polynomial operation, $A_i^\alpha$ is the same as $A_i$ since there is nothing to approximate.
Then, the approximate model $\mathcal{F}^{\alpha}$ can be considered as a composition of approximate blocks, ${A}_n^{\alpha} \circ {A}_{n-1}^{\alpha} \circ \cdots \circ {A}_0^{\alpha}$.

If there is only a single approximate block in the approximate deep learning model, the difference in inference results can be easily determined.
However, the model $\mathcal{F}^\alpha$ composed of more than two approximate blocks makes the situation more complicated.
Consider an input vector $\bb{x}$, two consecutive blocks $A_1$ and $A_2$, and their approximation $A_1^\alpha$ and $A_2^\alpha$.
The block $A_1^\alpha$ makes an approximation error which can be measured as $e:=\| A_1^\alpha(\bb{x}) - A_1(\bb{x}) \|_\infty$.
For a given input $\bb{x}$, the magnitude of the error $e$ is only determined by the intrinsic feature of block $A_1$.
However, the output error of the second block is quite different. Note that we can write the second block error as $\| A_2^\alpha(\bb{y}+\bb{e}) - A_2(\bb{y}) \|_\infty$, where $\bb{y} = A_1 (\bb{x})$ and $\bb{e}=A_1^\alpha (\bb{x})-A_1 (\bb{x})$. Considering $\|\bb{e}\|_\infty =e$, both block $A_2$ and the magnitude of $e$ affect the second block error. 
Then, we define a new measure for each block, \mf{\emph{error propagation function}}.

In this study, all inputs $\bb{x}$ in the deep learning models belong to a bounded set for a given dataset. We should set the approximation range $[-B,B]$ for the proposed approximate polynomials for the ReLU and max-pooling functions. We assume that we can choose a sufficiently large $B$ such that the inputs of all blocks fall in $[-B,B]$. That is, for a given deep learning model $\mathcal{F} = A_n \circ A_{n-1} \circ \cdots \circ A_0$ and the corresponding approximate model $\mathcal{F}^\alpha = A_n^\alpha \circ A_{n-1}^\alpha \circ \cdots \circ A_0^\alpha$, we have $\|A_i \circ \cdots \circ A_0 (\bb{x})\|_\infty \leq B$ and $\|A_i^\alpha \circ \cdots \circ A_0^\alpha (\bb{x})\|_\infty \leq B$ for $i, 0 \leq i \leq n-1$ and every input $\bb{x}$. In fact, we confirm through numerical analysis that these are satisfied for an appropriately large $B$ in Section \ref{sec:results}. 
We define error propagation function as follows:

\begin{definition}
For a block $A$, an error propagation function of $A$, $E_A^{\alpha}(\cdot)$, is defined as $E_A^{\alpha}(e) := \sup_{\|\bb{x}+\bb{e}\|_\infty \leq B,\|\bb{e}\|_\infty \leq e} \|{A}^{\alpha}(\bb{x}+\bb{e}) - A(\bb{x})\|_\infty$, where $e$ denotes the magnitude of output error of the previous block of $A$.
\end{definition}

Roughly speaking, block $A^\alpha$ propagates the input error $e$ (or the output error of the previous block) to output error $E_A^\alpha (e)$. 
With this error propagation function, we can estimate the difference between final inference results of $\mathcal{F}$ and $\mathcal{F}^\alpha $ straightforward as in the following theorem.
\begin{theorem}
\label{errorprop}
If the original model $\mathcal{F}$ can be decomposed as $A_n \circ A_{n-1} \circ \cdots \circ A_0$, then we have
\[
\|\mathcal{F}^{\alpha} (\bb{x}) - \mathcal{F}(\bb{x})\|_\infty \leq (E_{A_n} ^{\alpha} \circ E_{A_{n-1}} ^{\alpha} \circ \cdots \circ E_{A_0} ^{\alpha})(0)
\]
for every input $\bb{x}$.
\end{theorem}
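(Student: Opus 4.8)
The plan is to prove the bound by induction on the number of composed blocks, tracking how the accumulated error at each stage feeds into the error propagation function of the next block. The key structural observation is that the quantity $(E_{A_n}^{\alpha} \circ \cdots \circ E_{A_0}^{\alpha})(0)$ is built by starting with an error magnitude of $0$ at the input (the original input $\bb{x}$ is fed exactly, with no approximation) and then repeatedly applying each block's error propagation function. So I would set up an induction showing that for each prefix of the composition, the output error magnitude is bounded by the corresponding nested composition of error propagation functions applied to $0$.

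First I would fix notation: let $\bb{x}_0 = \bb{x}$ be the input, and define the true intermediate outputs $\bb{y}_i := (A_i \circ \cdots \circ A_0)(\bb{x})$ and the approximate intermediate outputs $\bb{y}_i^{\alpha} := (A_i^{\alpha} \circ \cdots \circ A_0^{\alpha})(\bb{x})$. I would introduce the accumulated error $e_i := \|\bb{y}_i^{\alpha} - \bb{y}_i\|_\infty$, so that $e_n = \|\mathcal{F}^{\alpha}(\bb{x}) - \mathcal{F}(\bb{x})\|_\infty$ is exactly the left-hand side we want to bound. The induction claim would be that $e_i \leq (E_{A_i}^{\alpha} \circ \cdots \circ E_{A_0}^{\alpha})(0)$ for every $i$. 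For the base case at the input before any block, the error is $0$, which is the identity we need. For the inductive step, I would write $\bb{y}_i = A_i(\bb{y}_{i-1})$ and $\bb{y}_i^{\alpha} = A_i^{\alpha}(\bb{y}_{i-1}^{\alpha}) = A_i^{\alpha}(\bb{y}_{i-1} + \bb{e}_{i-1})$ where $\bb{e}_{i-1} := \bb{y}_{i-1}^{\alpha} - \bb{y}_{i-1}$, and then invoke the definition of $E_{A_i}^{\alpha}$ directly: since $\|\bb{e}_{i-1}\|_\infty = e_{i-1}$ and (by the standing boundedness assumption) $\|\bb{y}_{i-1}^{\alpha}\|_\infty = \|\bb{y}_{i-1} + \bb{e}_{i-1}\|_\infty \leq B$, we get $\|A_i^{\alpha}(\bb{y}_{i-1} + \bb{e}_{i-1}) - A_i(\bb{y}_{i-1})\|_\infty \leq E_{A_i}^{\alpha}(e_{i-1})$.

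The crucial point that makes the induction close is that each $E_A^{\alpha}$ is monotonically nondecreasing in its argument: because the supremum in the definition is taken over a larger feasible set $\{\|\bb{e}\|_\infty \leq e\}$ as $e$ grows, a larger input error can only enlarge the worst-case output error. I would therefore state and use this monotonicity so that, from $e_{i-1} \leq (E_{A_{i-1}}^{\alpha} \circ \cdots \circ E_{A_0}^{\alpha})(0)$, applying $E_{A_i}^{\alpha}$ to both sides preserves the inequality and yields $e_i \leq E_{A_i}^{\alpha}(e_{i-1}) \leq (E_{A_i}^{\alpha} \circ \cdots \circ E_{A_0}^{\alpha})(0)$. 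This monotonicity step is the one place where a little care is genuinely needed, and I would regard it as the main (mild) obstacle, since the rest is a mechanical unrolling of the composition; everything else follows immediately from the definition of the error propagation function and the boundedness hypothesis guaranteeing that all intermediate approximate outputs stay inside $[-B, B]$ so that the supremum in the definition is legitimately being evaluated over a set that contains the actual point $(\bb{y}_{i-1}, \bb{e}_{i-1})$.
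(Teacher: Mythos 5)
Your proposal is correct and follows essentially the same route as the paper's proof: induction over the blocks, using the standing assumption $\|\bb{y}_{i-1}+\bb{e}_{i-1}\|_\infty\leq B$ to place the actual pair $(\bb{y}_{i-1},\bb{e}_{i-1})$ inside the supremum defining $E_{A_i}^{\alpha}$, and then closing the induction via the monotonicity of $E_{A_i}^{\alpha}$. The only cosmetic difference is that you justify the monotonicity explicitly (the feasible set grows with $e$), whereas the paper merely asserts it.
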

\begin{comment}
\begin{proof}
We will prove it by mathematical induction. 
First, for $n=0$, $\| A_0 ^{\alpha}(\bb{x}) - A_0 (\bb{x}) \|_\infty \leq \sup_{\bb{x}} \|A_0^{\alpha}(\bb{x}+\bb{0}) - A_0(\bb{x})\|_\infty = E_{A_0}^{\alpha}(0)$ by definition. 
Inductively, for $n=k-1$, assume that $\|(A_{k-1}^{\alpha} \circ \cdots \circ A_{0}^{\alpha})(\bb{x}) - (A_{k-1}\circ \cdots \circ A_{0})(\bb{x})\|_\infty \leq (E_{A_{k-1}} ^{\alpha} \circ \cdots \circ E_{A_0} ^{\alpha})(0)$. Let $\bb{y}' = A_{k-1}^{\alpha} \circ \cdots \circ A_{0}^{\alpha}(\bb{x})$ and $\bb{y} = A_{k-1}\circ \cdots \circ A_{0}(\bb{x})$. Then, for $n=k$,
\begin{align*}
\|(A_{k}^{\alpha} \circ \cdots \circ A_{0}^{\alpha})(\bb{x}) - (A_{k}\circ \cdots \circ A_{0})(\bb{x})\|_\infty &= \|A_{k}^{\alpha}(\bb{y}') - A_k (\bb{y}) \|_\infty \leq E_{A_k} ^{\alpha} (\|\bb{y}'-\bb{y}\|_\infty) \\
&\leq (E_{A_k} ^{\alpha} \circ E_{A_{k-1}} ^{\alpha} \circ \cdots \circ E_{A_0} ^{\alpha})(0)
\end{align*}
since $\|{A}_k^{\alpha}(\bb{x}+\bb{e}) - A(\bb{x})\|_\infty \leq E_{A_k} ^{\alpha} (\|\bb{e}\|_\infty)$ by definition, and $E_A^{\alpha}(e)$ is a non-decreasing function with respect to $e$.
\end{proof}
\end{comment}
This theorem asserts that we can achieve an upper bound of inference error of the whole model by estimating an error propagation function for each block.

We analyze the error propagation function for four types of single blocks: i) linear block, ii) ReLU block, iii) max-pooling block, and iv) softmax block, commonly used in deep neural networks for classification tasks.
From now on, we call these four types of blocks \emph{basic blocks} for convenience. 
Note that convolutional layers, batch normalization layers (for inference), average pooling layers, and fully connected layers are just linear combinations of input vectors and model parameters.
Therefore, it is reasonable to include all these layers in a linear block. 
We can express linear blocks as $A(\bb{x})=\bb{Ax}+\bb{b}$ for some matrix $\bb{A}$ and vector $\bb{b}$. \mf{The following} lemma shows practical upper bounds of error propagation function for basic blocks. \mf{Because the kernel size of the max-pooling function is not greater than ten in most applications, we constrain the kernel size not greater than ten in the following lemma.}
%In the lemma below, we do not consider max-pooling function with kernel size greater than ten because it   absurdly large kernel size, \md{and thus we constrain} the kernel size of max-pooling function not greater than 10.
\begin{lemma}
\label{basiclayers}
For the given error $e$, the error propagation functions of four basic blocks are upper bounded as follows: (a) If $A$ is a linear block with $A(\bb{x})=\bb{Ax}+\bb{b}$, then $E_A^\alpha (e) \leq \| \bb{A} \|_\infty e$, where $\| \cdot \|_\infty$ denotes the infinity norm of matrices. 
(b) If $A$ is a ReLU block, then $E_A^\alpha (e) \leq B 2^{-\alpha} + e$. 
(c) If $A$ is a max-pooling block with kernel size $k_0 \leq 10$ and $\alpha \geq 4$, then $E_A^\alpha (e) \leq 10B \lceil \log_2 k_0^2 \rceil 2^{-\alpha}  + e$ . (d) If $A$ is a softmax block, then $E_A^\alpha (e) \leq e/2$.
\end{lemma}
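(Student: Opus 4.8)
The plan is to handle the four cases separately but with one unified two-step strategy. For a block $A$ and any admissible perturbation $\bb{e}$ with $\|\bb{e}\|_\infty \leq e$ and $\|\bb{x}+\bb{e}\|_\infty \leq B$, I would insert the intermediate term $A(\bb{x}+\bb{e})$ and split
\[
A^\alpha(\bb{x}+\bb{e}) - A(\bb{x}) = \underbrace{A^\alpha(\bb{x}+\bb{e}) - A(\bb{x}+\bb{e})}_{\text{approximation error}} + \underbrace{A(\bb{x}+\bb{e}) - A(\bb{x})}_{\text{block sensitivity}},
\]
apply the triangle inequality, and take the supremum over all admissible $(\bb{x},\bb{e})$. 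The first piece is controlled by the per-coordinate approximation guarantees already established (the extended-range ReLU bound $|\tilde{r}_{\alpha,B}-\ReLU|\le B2^{-\alpha}$ on $[-B,B]$, and the extended-range max-pooling bound from Theorem \ref{maxpooling_thm}), while the second piece is controlled by the Lipschitz constant of the \emph{exact} block in the infinity norm. Since each output coordinate (or pooling window) is treated independently, the infinity-norm bound follows from the scalar/single-window bound.

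For (a), a linear block is already polynomial, so $A^\alpha = A$ and the approximation error vanishes; the sensitivity term is $\|\bb{A}\bb{e}\|_\infty \le \|\bb{A}\|_\infty\|\bb{e}\|_\infty$, giving the claim directly from the definition of the induced matrix infinity norm. For (b), the approximation error is at most $B2^{-\alpha}$ coordinatewise, and since $\ReLU$ is $1$-Lipschitz the sensitivity term is at most $\|\bb{e}\|_\infty\le e$, so the sum is $B2^{-\alpha}+e$. Case (c) is structurally identical: $\max$ is $1$-Lipschitz in the infinity norm (so the sensitivity term is at most $e$), and the approximation error on the extended range is at most $B'\lceil\log_2 k_0^2\rceil 2^{-\alpha}$ with $B' = B/(0.5-(\lceil\log_2 k_0^2\rceil-1)2^{-\alpha})$. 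The only genuine computation is to show $B'\le 10B$: with $k_0\le 10$ we have $\lceil\log_2 k_0^2\rceil\le 7$, and with $\alpha\ge 4$ we get $(\lceil\log_2 k_0^2\rceil-1)2^{-\alpha}\le 6/16 = 0.375$, so the denominator is at least $0.125$ and $B'\le 8B\le 10B$. This is exactly where the hypotheses $k_0\le 10$ and $\alpha\ge 4$ are consumed.

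The main obstacle is (d). Here the softmax is not approximated ($A^\alpha=A$, e.g.\ evaluated after decryption), so the entire bound must come from the sensitivity term, and the factor $e/2$ asserts that softmax is $\tfrac12$-Lipschitz from $\ell_\infty$ to $\ell_\infty$, a sharper constant than the naive $1$. I would prove this using the Jacobian $\partial\sigma_i/\partial x_j = \sigma_i(\delta_{ij}-\sigma_j)$ together with the representation $\sigma_i(\bb{x}+\bb{e})-\sigma_i(\bb{x}) = \int_0^1 \nabla\sigma_i(\bb{x}+t\bb{e})\cdot\bb{e}\,dt$. The decisive algebraic step is to rewrite the directional derivative as $\nabla\sigma_i\cdot\bb{e} = \sigma_i(e_i-\bar e)$ with $\bar e = \sum_j\sigma_j e_j$, and then use $\sum_j\sigma_j=1$ to obtain the refined estimate $|e_i-\bar e| = |\sum_{j\neq i}\sigma_j(e_i-e_j)| \le 2(1-\sigma_i)\|\bb{e}\|_\infty$, so that $|\nabla\sigma_i\cdot\bb{e}| \le 2\sigma_i(1-\sigma_i)\|\bb{e}\|_\infty \le \tfrac12\|\bb{e}\|_\infty$ by $\sigma_i(1-\sigma_i)\le\tfrac14$. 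Integrating in $t$ and maximizing over $i$ yields $\|\sigma(\bb{x}+\bb{e})-\sigma(\bb{x})\|_\infty\le\tfrac12\|\bb{e}\|_\infty\le e/2$. Extracting the factor $2(1-\sigma_i)$ instead of the crude bound $2$, which is what makes the $\tfrac14$ estimate applicable, is the delicate point of the whole lemma.
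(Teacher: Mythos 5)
Your proof is correct and follows essentially the same route as the paper's: the triangle-inequality split into approximation error plus exact-block sensitivity for (a)--(c), the same $B' \le 10B$ computation consuming $k_0\le 10$ and $\alpha\ge 4$, and for (d) the same quantitative Jacobian estimate $2\sigma_i(1-\sigma_i)\le \tfrac12$ (your directional-derivative regrouping $\nabla\sigma_i\cdot\bb{e}=\sigma_i(e_i-\bar e)$ is just a rearrangement of the paper's absolute row-sum bound, reached via the integral form of the mean value theorem rather than the pointwise form). No gaps.
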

Since the upper bounds of error propagation functions of all basic blocks have been determined, we can obtain an upper bound of the inference error theoretically as in the following theorem.
\begin{theorem}\label{thm:main_thm}
\label{main}
If $\alpha \geq 4$ and a deep learning model $\mathcal{F}$ contains only basic blocks in Lemma \ref{basiclayers}, then there exists a constant $C$ such that $\|\mathcal{F}^{\alpha} (\bb{x}) - \mathcal{F}(\bb{x})\|_\infty \leq C2^{-\alpha}$ for every input $\bb{x}$, where the constant $C$ can be determined independently of the precision parameter $\alpha$. 
\end{theorem}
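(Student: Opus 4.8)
The plan is to feed the block-wise estimates of Lemma \ref{basiclayers} into the telescoping bound of Theorem \ref{errorprop}, and then exploit the fact that every one of those estimates is an \emph{affine, non-decreasing} function of the incoming error whose additive term is a constant times $2^{-\alpha}$. First I would invoke Theorem \ref{errorprop} to reduce the claim to bounding the nested composition $(E_{A_n}^{\alpha} \circ \cdots \circ E_{A_0}^{\alpha})(0)$. The key structural observation is that each bound in Lemma \ref{basiclayers} has the form $E_{A_i}^{\alpha}(e) \leq a_i\, e + c_i\, 2^{-\alpha}$, with slope $a_i \geq 0$ and constant $c_i \geq 0$ both independent of $\alpha$: concretely $(a_i,c_i)=(\|\bb{A}_i\|_\infty,0)$ for a linear block, $(1,B)$ for a ReLU block, $(1,\,10B\lceil \log_2 k_0^2 \rceil)$ for a max-pooling block (here the hypothesis $\alpha\geq 4$ is used), and $(1/2,0)$ for a softmax block.

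Second, I would set $g_i(e) := a_i\, e + c_i\, 2^{-\alpha}$ and prove by induction on the block index that $(E_{A_n}^{\alpha} \circ \cdots \circ E_{A_0}^{\alpha})(0) \leq (g_n \circ \cdots \circ g_0)(0)$. This chaining is legitimate precisely because each $E_{A_i}^{\alpha}$ is non-decreasing (as already noted in the proof of Theorem \ref{errorprop}) and each $g_i$ dominates $E_{A_i}^{\alpha}$ pointwise with non-negative slope: monotonicity lets me enlarge the inner argument to its upper bound at every step without reversing the inequality.

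Third, I would unwind the affine composition explicitly. Writing $e_0 = g_0(0)$ and $e_j = g_j(e_{j-1})$, a routine telescoping yields
\[
(g_n \circ \cdots \circ g_0)(0) = \sum_{j=0}^{n} \Big( c_j \prod_{l=j+1}^{n} a_l \Big) 2^{-\alpha}.
\]
Since every $a_l$ and $c_j$ is independent of $\alpha$ and the number of blocks $n$ is fixed by $\mathcal{F}$, the bracketed sum is a finite constant
\[
C := \sum_{j=0}^{n} c_j \prod_{l=j+1}^{n} a_l,
\]
determined only by the architecture, the weight-matrix norms $\|\bb{A}_i\|_\infty$, the kernel sizes, and the range bound $B$, but not by $\alpha$. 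Hence $\|\mathcal{F}^{\alpha}(\bb{x}) - \mathcal{F}(\bb{x})\|_\infty \leq C\, 2^{-\alpha}$ for every $\bb{x}$, as claimed.

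The step I expect to be the main obstacle is not the algebra but making the factorization rigorous: I must confirm that the additive error injected by \emph{every} nonlinear block scales exactly as $2^{-\alpha}$ (so a single $2^{-\alpha}$ can be pulled out front) while every multiplicative slope is $\alpha$-free. The linear and softmax blocks contribute no new additive error, and the ReLU and max-pooling bounds in Lemma \ref{basiclayers} are engineered to be of the form $(\text{constant})\cdot 2^{-\alpha} + e$; verifying that $\alpha$ enters only through this common factor, and that $\alpha \geq 4$ is needed solely to license the max-pooling bound, is exactly what guarantees that the resulting $C$ is genuinely independent of $\alpha$.
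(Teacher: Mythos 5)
Your proposal is correct and follows essentially the same route as the paper's proof: reduce to bounding $(E_{A_n}^{\alpha}\circ\cdots\circ E_{A_0}^{\alpha})(0)$ via Theorem \ref{errorprop}, then induct over the blocks using the four affine bounds of Lemma \ref{basiclayers}, with $\alpha\geq 4$ entering only through the max-pooling case. The only difference is cosmetic: you unwind the affine recursion to an explicit closed-form constant $C=\sum_j c_j\prod_{l>j} a_l$, whereas the paper leaves the constants $C_k$ implicit at each inductive step.
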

\begin{comment}
\begin{proof}
We will prove it by mathematical induction. From Theorem \ref{errorprop}, it is enough to show that $E_{A_k} ^{\alpha} \circ \cdots \circ E_{A_0} ^{\alpha}(0) \leq C 2^{-\alpha}$ for some constant $C$.
First for $n=0$, assume that $\mathcal{F}$ has one block $A$.
The upper bounds of $E_A^{\alpha}(0)$ for the four basic blocks suggested in Lemma \ref{basiclayers} have forms $C_0 \cdot 2^{-\alpha}$, where $C_0$ can be zero.
Inductively, for $n=k$, assume that $E_{A_k} ^{\alpha} \circ \cdots \circ E_{A_0} ^{\alpha}(0) \leq C_k 2^{-\alpha}$ for some $C_k$. Then, for $n=k+1$, $E_{A_{k+1}}^\alpha(C_k 2^\alpha)$ is not greater than i) $\| \bb{A} \|_\infty C_k 2^{-\alpha}$ when $A_{k+1}$ is a linear block with $A(\bb{x})=\bb{Ax}+\bb{b}$, ii) $(1 + C_k) 2^{-\alpha}$ when $A_{k+1}$ is a ReLU block, iii) $(\lceil \log_2 k^2 \rceil + C_k) 2^{-\alpha}$ when $A_{k+1}$ is a max-pooling block with kernel size $k$, and iv) $\frac{1}{2}C_k 2^{-\alpha}$ when $A_{k+1}$ is a softmax block.
For each case, we can determine a constant $C_{k+1}$ which only depends on model parameters, and also satisfies $E_{A_{k+1}} ^{\alpha} \circ \cdots \circ E_{A_0} ^{\alpha}(0) \leq E_{A_{k+1}}^\alpha(C_k 2^\alpha) \leq C_{k+1} 2^{-\alpha}$.
\end{proof}
\end{comment}
\mf{
%Theorem \ref{thm:main_thm} implies that if we increase the precision parameter $\alpha$ enough, then the inference result of the original deep learning model and that of the proposed approximate model become close enough. 
Therefore, the performance of the proposed approximate deep learning model containing only basic blocks can be guaranteed theoretically by Theorem \ref{thm:main_thm} if the original model $\mathcal{F}$ is well trained.}

\paragraph{Practical application of Theorem \ref{thm:main_thm}}
\mg{
Theorem \ref{thm:main_thm} is valid for a deep learning model composed of only basic blocks. VGGNet \cite{simonyan2014very} is a representative example.
However, there are also various deep learning models with non-basic blocks. 
For example,
%GoogLeNet \cite{szegedy2015going} and Inception-v3 \cite{szegedy2016rethinking} contains \emph{inception modules} which connects convolutional layer and pooling layer in parallel. 
%Another representative example is 
ResNet \cite{he2016deep} contains \emph{residual learning building block}, which is hard to decompose into several basic blocks. Thus, the generalized version of Theorem \ref{thm:main_thm} is presented in Appendix C, which is also valid for ResNet.
Numerical results in Section \ref{sec:results} also support that the inference result of the original deep learning model and that of the proposed approximate deep learning model are close enough for a practical precision parameter $\alpha$.}
%And as shown in the following section, approximate deep learning model for appropriate parameter $\alpha$ work well.

\begin{comment}
\paragraph{Retraining} By some computational limitation, there can be a situation that large precision parameter $\alpha$ is not available.
However, if $\|\mathcal{F}^{\alpha} (\bb{x}) - \mathcal{F}(\bb{x})\|_\infty $ is small enough for some available precision parameter $\alpha$, we can improve the accuracy of inference for the proposed approximate deep learning model by slightly retraining $\mathcal{F}^\alpha$.
Theorem \ref{main} implies that the solution of degradation of the accuracy of $\mathcal{F}^\alpha$ by adopting small $\alpha$ is retraining.
To explain the relation of retraining and Theorem \ref{main}, suppose that $\mathcal{F}$ is well-trained. 
Then, the objective of retraining is to obtain the approximate deep learning model $\mathcal{F}^\alpha$, which outputs the same classification results of the original deep learning model $\mathcal{F}$. 
For a given input $\bb{x}$, if $\|\mathcal{F}^{\alpha} (\bb{x}) - \mathcal{F}(\bb{x})\|_\infty $ is large, it will be difficult to retrain the proposed approximate deep learning model from the given pre-trained parameters.
\end{comment}

\section{Results of numerical analysis}\label{sec:results}

\mf{In this section, the simulation results are presented for the original and the proposed approximate deep learning models for the plaintext input data.} Overall, our \mf{results} show that using the large precision parameter $\alpha$ obtains high performance in the proposed approximate deep learning model, which \me{confirms} Theorem 4.
%We use plaintext data input for both the original deep learning model and the proposed approximate deep learning model to obtain the numerical results.
\mf{All simulations are proceeded by NVIDIA GeForce RTX 3090 GPU along with AMD Ryzen 9 5950X 16-Core Processor CPU.}
\subsection{Numerical analysis on the CIFAR-10}
%The dataset we use first is CIFAR-10 \cite{krizhevsky2009learning} as the input data, which has 50k training images and 10k test images with 10 classes. 

\paragraph{Pre-trained backbone models}
We \mf{use the CIFAR-10 \cite{krizhevsky2009learning} as the input data, which has 50k training images and 10k test images with 10 classes.} We use ResNet and VGGNet \mf{with batch normalization} as backbone models of the proposed approximate deep learning model.
%For CIFAR-10, we simulate the proposed approximate deep learning model using backbone as with ResNet \cite{he2016deep} and VGGNet \cite{simonyan2014very}.
%Here, we simulate using not only VGGNet but ResNet to check whether the proposed approximate deep learning model performs well in the model that contains some blocks which is not basic.
%(Actually, it can be shown that the ResNet model also satisfies Theorem \ref{main} in the previous section. See Appendix B.)
\mf{To verify that the proposed approximate deep learning models work well for a various number of layers, we use ResNet with 20, 32, 44, 56, and 110 layers and VGGNet with 11, 13, 16, and 19 layers proposed in \cite{he2016deep} and \cite{simonyan2014very}, respectively. }
In order to obtain pre-trained parameters for each model, we apply the optimizer suggested in \cite{he2016deep} on both ResNet and VGGNet for the CIFAR-10. 

\paragraph{Approximation range} 
The \mf{approximation range $[-B,B]$ should be determined in the proposed approximate deep learning model.} 
\me{We examine the input values of the approximate polynomials of the ReLU and max-pooling functions and determine the value of $B$ by adding some margin to the maximum of the input values.} 
%We find the value of $B$, where input values of the approximate ReLU and max-pooling functions are contained in $[-B,B]$.
\mf{For the CIFAR-10, we choose $B=50$ for both the approximate polynomials of ReLU and max-pooling functions. However, a larger $B$ is required for retraining with ResNet-110 for $\alpha = 7,8,$ and $9$. Thus, in the case of retraining with ResNet-110, we use $B=360, 195$, and $70$ for $\alpha=7,8,$ and $9$, respectively.
}
%train each model with the original ReLU and max-pooling layers for 200 epochs with cross-entropy loss. 
%The optimizer we use is SGD, a learning rate $10^{-3}$, weight decay of $5\times10^{-3}$, momentum of $0.9$ with nesterov.
%This optimizer is suggested in \cite{he2016deep} for CIFAR-10 dataset which is used in ResNet, and we apply this same optimizer to all deep neural networks including VGGNet.
%Baseline part in Table \ref{tab:infacc_CIFAR-10} shows the top-1 accuracy for each pre-trained model without retraining.

\paragraph{Numerical results} 
\begin{table}[]
\centering
\caption{The top-1 accuracy of each approximate deep learning model with respect to $\alpha$ for the plaintext CIFAR-10 dataset. Baseline means the accuracy of the original deep learning model. \mf{The accuracies of the approximate deep learning models with a difference less than 5\% from the baseline accuracy are underlined, and \mg{less} than 1\% are underlined with boldface.}}
{\small
\begin{tabular}{cc|rrrrrrrr}
\toprule
%\multirow{2}{*}{\begin{tabular}[c]{@{}c@{}}Backbone\end{tabular}} & \multirow{2}{*}{\begin{tabular}[c]{@{}c@{}}Baseline(\%)\end{tabular}} & \multicolumn{8}{c}{$\alpha$ (precision parameter)}                                                                                                                                                                            \\
%                                                                          &                                                                                        & \multicolumn{1}{c}{7}     & \multicolumn{1}{c}{8}     & \multicolumn{1}{c}{9}     & \multicolumn{1}{c}{10}    & \multicolumn{1}{c}{11}    & \multicolumn{1}{c}{12}    & \multicolumn{1}{c}{13}    & \multicolumn{1}{c}{14}    \\ \hline
Backbone                                                                & Baseline(\%)                                                                                  & \multicolumn{1}{c}{$\alpha=$7}     & \multicolumn{1}{c}{8}     & \multicolumn{1}{c}{9}     & \multicolumn{1}{c}{10}    & \multicolumn{1}{c}{11}    & \multicolumn{1}{c}{12}    & \multicolumn{1}{c}{13}    & \multicolumn{1}{c}{14}                     \\ \hline
ResNet-20                                                                 & 88.36                                                                                  & 10.00                     & 10.00                     & 10.00                     & 19.96                     & 80.91                     & \uline{86.44}           &\uline{\textbf{88.01}}                    & \uline{\textbf{88.41}}                      \\
ResNet-32                                                                 & 89.38                                                                                  & 10.00                     & 10.00                     & 10.01                     & 36.16                     & \uline{84.92}                     & \uline{\textbf{88.08}}                     & \uline{\textbf{88.99}}                     & \uline{\textbf{89.23}}                     \\
ResNet-44                                                                 & 89.41                                                                                  & 10.00                     & 10.00                     & 10.00                     & 38.35                     & \uline{85.75}                     & \uline{88.31}                     & \uline{\textbf{89.31}}                     & \uline{\textbf{89.37}}                     \\
ResNet-56                                                                 & 89.72                                                                                  & 10.00                     & 10.00                     & 9.94                      & 39.48                     & \uline{86.54}                     & \uline{\textbf{89.18}}                     & \uline{\textbf{89.53}}                     & \uline{\textbf{89.67}}                     \\
ResNet-110                                                                & 89.87                                                                                  & 10.00                     & 10.00                     & 10.04                     & 67.02                     & \uline{86.75}            & \uline{\textbf{89.17}}                     & \uline{\textbf{89.62}}                     & \uline{\textbf{89.84}}                     \\ \hline
VGG-11                                                                    & 90.17                                                                                  & \multicolumn{1}{l}{10.77} & \multicolumn{1}{l}{11.74} & \multicolumn{1}{l}{15.61} & \multicolumn{1}{l}{29.39} & \multicolumn{1}{l}{66.60} & \multicolumn{1}{l}{85.22} & \multicolumn{1}{l}{\uline{88.95}} & \multicolumn{1}{l}{\uline{\textbf{89.91}}} \\
VGG-13                                                                    & 91.90                                                                                  & \multicolumn{1}{l}{11.32} & \multicolumn{1}{l}{11.57} & \multicolumn{1}{l}{14.37} & \multicolumn{1}{l}{36.84} & \multicolumn{1}{l}{77.55} & \multicolumn{1}{l}{\uline{88.72}} & \multicolumn{1}{l}{\uline{90.66}} & \multicolumn{1}{l}{\uline{\textbf{91.37}}} \\
VGG-16                                                                    & 91.99                                                                                  & \multicolumn{1}{l}{10.02} & \multicolumn{1}{l}{10.06} & \multicolumn{1}{l}{10.76} & \multicolumn{1}{l}{32.93} & \multicolumn{1}{l}{78.27} & \multicolumn{1}{l}{\uline{89.69}} & \multicolumn{1}{l}{\uline{\textbf{91.13}}} & \multicolumn{1}{l}{\uline{\textbf{91.87}}} \\
VGG-19                                                                    & 91.75                                                                                  & \multicolumn{1}{l}{10.18} & \multicolumn{1}{l}{13.84} & \multicolumn{1}{l}{18.18} & \multicolumn{1}{l}{43.56} & \multicolumn{1}{l}{80.40} & \multicolumn{1}{l}{\uline{89.42}} & \multicolumn{1}{l}{\uline{\textbf{91.10}}} & \multicolumn{1}{l}{\uline{\textbf{91.61}}}  \\
\bottomrule
\end{tabular}
}
\label{tab:cifar10inf}
\end{table}

First of all, neither the activation functions $x^2$, suggested in \cite{gilad2016cryptonets} and \cite{sealion}, nor $2^{-3} x^2 + 2^{-1} x + 2^{-2}$, suggested in \cite{FasterCryptoNets}, shows good performance in our backbone models.
These activation functions give the inference results around 10.0\% with our pre-trained parameters.
Also, ResNet and VGGNet cannot be trained by the activation functions $x^2$ and $2^{-3} x^2 + 2^{-1} x + 2^{-2}$, since these activation functions cause a more \mf{significant} gradient exploding problem as the original model gets deeper.
Therefore, a polynomial with a small degree is inappropriate to utilize the pre-trained model, \mf{which implies the importance of the proposed precise approximate polynomial for PPML in ResNet and VGGNet.}

The inference results of the proposed approximate deep learning models are given in Table \ref{tab:cifar10inf}.
Table \ref{tab:cifar10inf} shows that if we adopt a large precision parameter $\alpha$, the proposed model becomes more accurate, confirming Theorem \ref{main}.
We achieve the polynomial-operation-only VGG-16 model with 91.87\% accuracy for the CIFAR-10 using $\alpha=14$.
%Note that if more accurate pre-trained parameters are available, it can be expected that a performance higher than 91.87\% will be obtained.
\mf{In addition}, the proposed approximate deep learning model performs almost similarly (1\% difference) to the original deep learning model when $\alpha =12$\mg{--}$14$. 
%According to Table \ref{comparison_minimax_proposed}, it is hard to use simple minimax polynomial for $\alpha=12$ or $13$ since it requires polynomial with degree more than 200.
%Therefore, to inference the dataset CIFAR-10 more accurate, we should use the proposed precise approximate polynomial rather than the minimax polynomial.

\begin{table}
\vspace{-0.4cm}
\begin{minipage}{0.6\linewidth}
\centering
\caption{Obtained the top-1 accuracy of each deep learning model with respect to $\alpha$ for the plaintext CIFAR-10 dataset by retraining for ten epochs. \mf{The accuracies with difference less than 5\% from the baseline accuracy are underlined.}}
{\small
\begin{tabular}{cc|rrrr}
\toprule
%\multirow{2}{*}{\begin{tabular}[c]{@{}c@{}}Backbone\end{tabular}} & \multirow{2}{*}{\begin{tabular}[c]{@{}c@{}}Baseline(\%)\end{tabular}} & \multicolumn{8}{c}{$\alpha$ (precision parameter)}                                                                                                                                                                            \\
%                                                                          &                                                                                        & \multicolumn{1}{c}{7}     & \multicolumn{1}{c}{8}     & \multicolumn{1}{c}{9}     & \multicolumn{1}{c}{10}    & \multicolumn{1}{c}{11}    & \multicolumn{1}{c}{12}    & \multicolumn{1}{c}{13}    & \multicolumn{1}{c}{14}    \\ \hline
Backbone                                                                & Baseline                                                                               & \multicolumn{1}{c}{$\alpha=$7}     & \multicolumn{1}{c}{8}     & \multicolumn{1}{c}{9}     & \multicolumn{1}{c}{10}                      \\ \hline
ResNet-20                                                                 & 88.36                                                                 &80.28                 & \uline{84.83}                  & \uline{86.76}                     & \uline{86.61}                                       \\
ResNet-32                                                                 & 89.38                                                                                  & 80.09                    & \uline{86.08}                    & \uline{87.76}                     & \uline{87.85}                                      \\
ResNet-44                                                                 & 89.41                                                                                  & 76.13                     & 82.93                    & \uline{85.61}                     & \uline{88.06}                                     \\
ResNet-56                                                                 & 89.72                                                                                  & 70.90                     & 77.59                     & 85.06                      & \uline{87.96}                                       \\
ResNet-110                                                                & 89.87                                                                                  & 50.47                     & 31.24                     & 79.93                    & \uline{88.28}                                       \\ \hline
VGG-11                                                                    & 90.17                                                                                  & \multicolumn{1}{l}{66.54} & \multicolumn{1}{l}{78.57} & \multicolumn{1}{l}{\uline{86.56}} & \multicolumn{1}{l}{\uline{88.04}}  \\
VGG-13                                                                    & 91.90                                                                                  & \multicolumn{1}{l}{73.09} & \multicolumn{1}{l}{83.67} & \multicolumn{1}{l}{\uline{88.97}} & \multicolumn{1}{l}{\uline{90.74}} \\
VGG-16                                                                    & 91.99                                                                                  & \multicolumn{1}{l}{73.75} & \multicolumn{1}{l}{84.12} & \multicolumn{1}{l}{\uline{89.37}} & \multicolumn{1}{l}{\uline{90.65}} \\
VGG-19                                                                    & 91.75                                                                                  & \multicolumn{1}{l}{75.54} & \multicolumn{1}{l}{83.77} & \multicolumn{1}{l}{\uline{88.90}} & \multicolumn{1}{l}{\uline{90.21}}  \\
\bottomrule
\end{tabular}
}
\label{tab:retraining}
\end{minipage}
\begin{minipage}{0.4\linewidth}
    \caption{The top-1 accuracy of each approximate deep learning model for the plaintext ImageNet dataset with $\alpha=14$. \mf{The accuracies are underlined and boldfaced in the same manner as in Table \ref{tab:cifar10inf}.}}
    \label{tab:infacc_imagenet}
    \centering
    {\small
\begin{tabular}{ccc}
\toprule
Backbone& Baseline & Proposed \\ \hline
Inception-v3 \cite{szegedy2016rethinking}                                            & 69.54                                                              & \uline{\textbf{69.23}}                                                                     \\
GoogLeNet \cite{szegedy2015going}                                                & 68.13                                                              &\uline{69.76}                                                                             \\
VGG-19 \cite{simonyan2014very}                                                   & 74.22                                                              &      \uline{73.46}           \\
ResNet-152 \cite{he2016deep}                                               & 78.31                                                              &     \uline{77.52}                 \\
%ResNeXt-101 \cite{xie2017aggregated}                                       & 79.31                                                              &   \textbf{78.25}              \\
\bottomrule
\end{tabular}
}
\end{minipage}
\vspace{-0.6cm}
\end{table}

\mf{Moreover, if retraining is allowed for a few epochs, we also obtain} high performance in the proposed approximate deep learning model with a \mf{smaller} precision parameter $\alpha$.
We retrain the proposed model using the same optimizer for pre-training but only for \mf{ten} epochs and drop learning rate with multiplicative factor $0.1$ at the 5th epoch. 
The results of retraining with precision parameter $\alpha=7 $\mg{--}$10$ are summarized in Table \ref{tab:retraining}.
%As we can see in Table \ref{tab:cifar10inf}, the performance of the proposed approximate deep learning model looks similar to randomly initialized model for small $\alpha$, about 10.00\%. 
\mf{Comparing Tables \ref{tab:cifar10inf} and \ref{tab:retraining} shows that the required precision parameter $\alpha$ for high classification accuracy is reduced if we allow retraining for ten epochs.}

\subsection{Numerical analysis on the ImageNet}
\mf{We also analyze the classification accuracy of the proposed approximate deep learning model for the ImageNet dataset \cite{russakovsky2015imagenet}, which has 1.28M training images and 50k validation images with 1000 classes.}
\mf{We simulate various models whose pre-trained parameters are available for inference of the ImageNet. }%, similar to the method of CIFAR-10.
For the approximation range $[-B,B]$, we choose $B=400$ for the ReLU function and $B=100$ for the max-pooling function. 
The models that we use and their inference results are summarized in Table \ref{tab:infacc_imagenet}.
%We successfully obtain accuracy for ImageNet in various models whose pre-trained parameters are available, and we summarize the achieved performance in Table \ref{tab:infacc_imagenet}. 
Without any training, we easily achieve a polynomial-operation-only model with an accuracy of 77.52\% for the ImageNet.
\mf{From this result, it can be seen that even if the dataset is large, the proposed model shows high performance, similar to the original model.}

%we showed that the cost of training the proposed approximate model is significantly lower than randomly initialized model.

%The two approximate ResNet-20 model for the performance of the ResNet-20 for randomly initialized and pre-trained models pr

%For $\alpha=8$, the performance of the two ResNet-20 models before training is similarly low (around 10.0\%), but as soon as training starts, we can see from Figure \ref{fig:alphaacc} that the approximate model proposed shows good performance much faster.

%이제 pre-trained parameter를 학습시켜서 성능이 어디까지 올라가는지 확인해 보았고 그 결과를 Table에 수록하였다. 
%그 결과 $\alpha=6$까지도 보면 10\% 오차 이내의 의미있는 성능을 10 epoch의 training만으로도 얻어낼 수 있음을 확인하였다.
%$\alpha$가 클수록 성능 향상의 정도가 빠르고 loss값도 더 줄어들고 이 역시 Theorem \ref{main}으로부터 기대할 수 있는 내용이다. 
%또한, 이러한 성능 회복은 layer와 VGG에 대해서 어떻게?? 분석 필요. 아니면, 그냥 동일하게 줄어든다.
%(필요한 figure - 2. alpha=4,6,8일 때, ResNet20. loss, test acc. 3. ResNet20 - ResNet44, ResNet110에서 (alpha=8)의 loss, text acc. )
%(Table. Retraining으로, pre-trained parameter를 얼마까지 끌어올릴 수 있는지. 퍼센트 표기. (성능, --\%)
%(최종 table. 각 alpha에 대해 얻어낸 성능. Retrained? which model?)

\subsection{Discussion of classification accuracy performance on FHE}\label{sec:discussion_fhe}
In this study, deep learning was performed for plaintext input data, not for the homomorphically encrypted input data. \mg{However, since polynomials replaced all non-arithmetic operations, the proposed approximate deep learning model can also efficiently be performed in FHE, and the classification accuracy would be identical. 
In particular, the proposed approximate deep learning model will work well in the RNS-CKKS scheme \cite{RNS-CKKS} because the computational time and precision of RNS-CKKS scheme operations have been improved significantly in recent years \cite{GPUCKKS,lee2020optimal, variance, scale_invariant}.}
%Using the RNS-CKKS scheme \cite{RNS-CKKS} is recommended because the computational time and precision of RNS-CKKS scheme operations have \mf{recently} been improved a lot \cite{GPUCKKS,lee2020optimal, variance, scale_invariant}.}

%To perform various well-studied deep learning models on the encrypted data using the proposed approximate polynomials will be a future work. 

\section{Conclusion}
We proposed the polynomials that precisely approximate the ReLU and max-pooling functions using a composition of minimax approximate polynomials of small degrees for PPML using FHE. We theoretically showed that if the precision parameter $\alpha$ is large enough, the difference between the inference result of the original and the proposed approximate models becomes small enough. For the CIFAR-10 dataset, the \mg{proposed model approximating} ResNet or VGGNet had classification accuracy with only 1\% error from the original deep learning model when we used the precision parameter $\alpha$ from 12 to 14. Furthermore, for the first time for PPML using word-wise HE, we achieved 77.52\% of test accuracy for ImageNet classification, which was close to the original accuracy of 78.31\% for the precision parameter $\alpha = 14$.

\newpage

\section*{Appendix}

\appendix
\section{Coefficients of the approximate polynomial of the sign function, $p_\alpha(x)$}
In this section, for the approximate polynomial of the sign function, $p_\alpha(x)$, the coefficients of the component polynomials $p_{\alpha,1}(x), \cdots , p_{\alpha,k}(x)$ using the Remez algorithm \cite{Remez} are shown for the precision parameter $\alpha=7,8,\cdots,14$. Using the coefficients in the table below, we can obtain the proposed approximate polynomial of the ReLU function using the following equation:
\[
r_\alpha(x) = \frac{1}{2}(x+x p_\alpha(x)) = \frac{1}{2}(x+x(p_{\alpha,k} \circ \cdots \circ p_{\alpha,1})(x)).
\]

Note that $p_{\alpha,i}(x)$ is an odd function theoretically, and the obtained coefficients of even-degree terms of $p_{\alpha,i}(x)$ by using the Remez algorithm \cite{Remez} are almost zero. Even if we delete the even-degree terms of $p_{\alpha,i}(x)$ and obtain $r_\alpha(x)$ using the modified $p_{\alpha,i}(x)$, the approximate deep learning model using the modified $r_\alpha(x)$ has almost the same inference results.

\begin{longtable}{c|cc|S[table-format=-1.14e-2]}
\toprule
$\alpha$ & $i$ & $m$ & {\centering \text{Coefficients of} $x^m$ \text{of} $p_{\alpha,i}(x)$ }\\ \hline
7        & 1   & 0   & 3.60471572275560e-36                              \\
         &     & 1   & 7.30445164958251e+00                              \\
         &     & 2   & -5.05471704202722e-35                             \\
         &     & 3   & -3.46825871108659e+01                             \\
         &     & 4   & 1.16564665409095e-34                              \\
         &     & 5   & 5.98596518298826e+01                              \\
         &     & 6   & -6.54298492839531e-35                             \\
         &     & 7   & -3.18755225906466e+01                             \\ 
         &  2   & 0   & -9.46491402344260e-49                             \\
         &     & 1   & 2.40085652217597e+00                              \\
         &     & 2   & 6.41744632725342e-48                              \\
         &     & 3   & -2.63125454261783e+00                             \\
         &     & 4   & -7.25338564676814e-48                             \\
         &     & 5   & 1.54912674773593e+00                              \\
         &     & 6   & 2.06916466421812e-48                              \\
         &     & 7   & -3.31172956504304e-01                            \\ \hline
      8   & 1 & 0 & -5.30489756589578E-48 \\
 &  &1  & 8.83133072022416E+00  \\
 &  & 2 & 2.15841891006552E-46  \\
 &  & 3 & -4.64575039895512E+01 \\
 &  & 4 & -6.58937888826136E-46 \\
 &  & 5 & 8.30282234720408E+01  \\
 &  & 6 & 4.43205802152239E-46  \\
 &  & 7 & -4.49928477828070E+01 \\
 &2  & 0 & -3.39457286447112E-32 \\
 &  & 1 & 3.94881885083263E+00  \\
 &  & 2 & 8.77744308285903E-31  \\
 &  & 3 & -1.29103010992282E+01 \\
 &  & 4 & -3.73356852706615E-30 \\
 &  & 5 & 2.80865362174658E+01  \\
 &  & 6 & 5.59273808588447E-30  \\
 &  & 7 & -3.55969148965137E+01 \\
 &  & 8 & -3.36963375307073E-30 \\
 &  & 9 & 2.65159370881337E+01  \\
 &  & 10 & 5.36813679148778E-31  \\
 &  & 11 & -1.14184889368449E+01 \\
 &  & 12 & 1.91081017684427E-31  \\
 &  & 13 & 2.62558443881334E+00  \\
 &  & 14 & -5.50686982942230E-32 \\
 &  & 15 & -2.49172299998642E-01 \\ \hline
 9 & 1 & 0  & 3.85169741234183E-44  \\
  &   & 1  & 1.80966285718807E+01  \\
  &   & 2  & -4.59730416916377E-42 \\
  &   & 3  & -4.34038703274886E+02 \\
  &   & 4  & 7.96299160375690E-41  \\
  &   & 5  & 4.15497103545696E+03  \\
  &   & 6  & -5.28977110396316E-40 \\
  &   & 7  & -1.86846943613149E+04 \\
  &   & 8  & 1.67219551148917E-39  \\
  &   & 9  & 4.41657177889329E+04  \\
  &   & 10 & -2.69777424798506E-39 \\
  &   & 11 & -5.65527928983401E+04 \\
  &   & 12 & 2.14124591383569E-39  \\
  &   & 13 & 3.71156122725781E+04  \\
  &   & 14 & -6.61722455927198E-40 \\
  &   & 15 & -9.78241933892781E+03 \\
  & 2 & 0  & -1.04501074063854E-46 \\
  &   & 1  & 3.79753323360856E+00  \\
  &   & 2  & 4.22842209818016E-45  \\
  &   & 3  & -1.17718157771192E+01 \\
  &   & 4  & -2.25571113936639E-44 \\
  &   & 5  & 2.49771086678346E+01  \\
  &   & 6  & 4.42462875106862E-44  \\
  &   & 7  & -3.15238841603993E+01 \\
  &   & 8  & -4.13554194411645E-44 \\
  &   & 9  & 2.37294863126722E+01  \\
  &   & 10 & 2.00060158783094E-44  \\
  &   & 11 & -1.04331800195923E+01 \\
  &   & 12 & -4.86041132712796E-45 \\
  &   & 13 & 2.46743976260838E+00  \\
  &   & 14 & 4.71256214052049E-46  \\
  &   & 15 & -2.42130100247617E-01 \\ \hline
  10 & 1 & 0  & -1.68048812248597E-47 \\
  &   & 1  & 1.08541842577442E+01  \\
  &   & 2  & 5.19213405604261E-46  \\
  &   & 3  & -6.22833925211098E+01 \\
  &   & 4  & -1.67358715007438E-45 \\
  &   & 5  & 1.14369227820443E+02  \\
  &   & 6  & 1.15437076692363E-45  \\
  &   & 7  & -6.28023496973074E+01 \\
  & 2 & 0  & 7.86253562483970E-39  \\
  &   & 1  & 4.13976170985111E+00  \\
  &   & 2  & -7.18241741649940E-38 \\
  &   & 3  & -5.84997640211679E+00 \\
  &   & 4  & 5.17878634442782E-38  \\
  &   & 5  & 2.94376255659280E+00  \\
  &   & 6  & -9.33059743960049E-39 \\
  &   & 7  & -4.54530437460152E-01 \\
  & 3 & 0  & 3.75374153583292E-39  \\
  &   & 1  & 3.29956739043733E+00  \\
  &   & 2  & -1.04537140020889E-37 \\
  &   & 3  & -7.84227260291355E+00 \\
  &   & 4  & 4.18647895984231E-37  \\
  &   & 5  & 1.28907764115564E+01  \\
  &   & 6  & -6.09510159540855E-37 \\
  &   & 7  & -1.24917112584486E+01 \\
  &   & 8  & 4.05475441247124E-37  \\
  &   & 9  & 6.94167991428074E+00  \\
  &   & 10 & -1.26770087815848E-37 \\
  &   & 11 & -2.04298067399942E+00 \\
  &   & 12 & 1.52452197400636E-38  \\
  &   & 13 & 2.46407138926031E-01  \\ \hline
11 & 1 & 0  & -7.57739739977406E-31 \\
  &   & 1  & 1.12590667402954E+01  \\
  &   & 2  & 3.50298736439109E-29  \\
  &   & 3  & -6.54692933329973E+01 \\
  &   & 4  & -1.10939529976921E-28 \\
  &   & 5  & 1.20694634277757E+02  \\
  &   & 6  & 7.59102357594000E-29  \\
  &   & 7  & -6.64019695377825E+01 \\
  & 2 & 0  & 6.70745934852179E-49  \\
  &   & 1  & 4.70477624210883E+00  \\
  &   & 2  & -3.56089594615542E-48 \\
  &   & 3  & -6.79884851596681E+00 \\
  &   & 4  & 1.43595646069924E-48  \\
  &   & 5  & 3.31525104382873E+00  \\
  &   & 6  & -1.49853421385792E-49 \\
  &   & 7  & -4.89362936859897E-01 \\
  & 3 & 0  & -3.72466666381643E-46 \\
  &   & 1  & 5.36334257654953E+00  \\
  &   & 2  & 4.38732568853777E-44  \\
  &   & 3  & -3.55169555441962E+01 \\
  &   & 4  & -7.27558627095135E-43 \\
  &   & 5  & 1.77807304115644E+02  \\
  &   & 6  & 4.67563353147443E-42  \\
  &   & 7  & -5.92297395415024E+02 \\
  &   & 8  & -1.55919114948318E-41 \\
  &   & 9  & 1.34891691889362E+03  \\
  &   & 10 & 3.10198751789898E-41  \\
  &   & 11 & -2.15876445084938E+03 \\
  &   & 12 & -3.97952274644606E-41 \\
  &   & 13 & 2.47365685586918E+03  \\
  &   & 14 & 3.44258367173957E-41  \\
  &   & 15 & -2.04913542536248E+03 \\
  &   & 16 & -2.05447361791160E-41 \\
  &   & 17 & 1.22739317090559E+03  \\
  &   & 18 & 8.49694447438626E-42  \\
  &   & 19 & -5.25826175134002E+02 \\
  &   & 20 & -2.39856365003446E-42 \\
  &   & 21 & 1.56930558712840E+02  \\
  &   & 22 & 4.42223468597695E-43  \\
  &   & 23 & -3.09658595645912E+01 \\
  &   & 24 & -4.81331429819361E-44 \\
  &   & 25 & 3.62894000814968E+00  \\
  &   & 26 & 2.35181370578811E-45  \\
  &   & 27 & -1.91160283749939E-01 \\ \hline
12 & 1 & 0  & 3.28671253798158E-45  \\
  &   & 1  & 1.15523042357223E+01  \\
  &   & 2  & -2.92342552286817E-43 \\
  &   & 3  & -6.77794513440968E+01 \\
  &   & 4  & 9.33659553243619E-43  \\
  &   & 5  & 1.25283740404562E+02  \\
  &   & 6  & -6.41319512076187E-43 \\
  &   & 7  & -6.90142908232934E+01 \\
  & 2 & 0  & 6.41087388948633E-46  \\
  &   & 1  & 9.65167636181626E+00  \\
  &   & 2  & -1.22822329506037E-43 \\
  &   & 3  & -6.16939174538469E+01 \\
  &   & 4  & 6.20624566340835E-43  \\
  &   & 5  & 1.55170351652298E+02  \\
  &   & 6  & -9.96218491919333E-43 \\
  &   & 7  & -1.82697582383214E+02 \\
  &   & 8  & 7.27748968270610E-43  \\
  &   & 9  & 1.12910726525406E+02  \\
  &   & 10 & -2.69134924845614E-43 \\
  &   & 11 & -3.77752411770263E+01 \\
  &   & 12 & 4.93235088742835E-44  \\
  &   & 13 & 6.47503909732344E+00  \\
  &   & 14 & -3.56876826458906E-45 \\
  &   & 15 & -4.45613365723361E-01 \\
  & 3 & 0  & 4.77710576312791E-47  \\
  &   & 1  & 5.25888355571745E+00  \\
  &   & 2  & -2.94571921438375E-45 \\
  &   & 3  & -3.37233593794284E+01 \\
  &   & 4  & 4.43279401132879E-44  \\
  &   & 5  & 1.64983085013457E+02  \\
  &   & 6  & -2.76863985519552E-43 \\
  &   & 7  & -5.41408891406992E+02 \\
  &   & 8  & 9.15181002994263E-43  \\
  &   & 9  & 1.22296207997963E+03  \\
  &   & 10 & -1.82071128047940E-42 \\
  &   & 11 & -1.95201910566479E+03 \\
  &   & 12 & 2.34852758455781E-42  \\
  &   & 13 & 2.24084021378300E+03  \\
  &   & 14 & -2.05169300503205E-42 \\
  &   & 15 & -1.86634916983170E+03 \\
  &   & 16 & 1.24157528220800E-42  \\
  &   & 17 & 1.12722117843121E+03  \\
  &   & 18 & -5.22850779777304E-43 \\
  &   & 19 & -4.88070474638380E+02 \\
  &   & 20 & 1.50923249713814E-43  \\
  &   & 21 & 1.47497846308920E+02  \\
  &   & 22 & -2.85763045364643E-44 \\
  &   & 23 & -2.95171048879526E+01 \\
  &   & 24 & 3.20768549654760E-45  \\
  &   & 25 & 3.51269520930994E+00  \\
  &   & 26 & -1.62263985493395E-46 \\
  &   & 27 & -1.88101836557879E-01 \\ \hline
13 & 1 & 0  & 1.34595769293910E-33  \\
  &   & 1  & 2.45589415425004E+01  \\
  &   & 2  & 4.85095667238242E-32  \\
  &   & 3  & -6.69660449716894E+02 \\
  &   & 4  & -2.44541235853840E-30 \\
  &   & 5  & 6.67299848301339E+03  \\
  &   & 6  & 1.86874811944640E-29  \\
  &   & 7  & -3.06036656163898E+04 \\
  &   & 8  & -5.76227817577242E-29 \\
  &   & 9  & 7.31884032987787E+04  \\
  &   & 10 & 8.53680673009259E-29  \\
  &   & 11 & -9.44433217050084E+04 \\
  &   & 12 & -6.02701474694667E-29 \\
  &   & 13 & 6.23254094212546E+04  \\
  &   & 14 & 1.62342843661940E-29  \\
  &   & 15 & -1.64946744117805E+04 \\
  & 2 & 0  & 1.53261588585630E-47  \\
  &   & 1  & 9.35625636035439E+00  \\
  &   & 2  & -3.68972123048249E-46 \\
  &   & 3  & -5.91638963933626E+01 \\
  &   & 4  & 1.74254399703303E-45  \\
  &   & 5  & 1.48860930626448E+02  \\
  &   & 6  & -3.20672110002213E-45 \\
  &   & 7  & -1.75812874878582E+02 \\
  &   & 8  & 2.79115738948645E-45  \\
  &   & 9  & 1.09111299685955E+02  \\
  &   & 10 & -1.22590309306100E-45 \\
  &   & 11 & -3.66768839978755E+01 \\
  &   & 12 & 2.62189142557962E-46  \\
  &   & 13 & 6.31846290311294E+00  \\
  &   & 14 & -2.16662326421275E-47 \\
  &   & 15 & -4.37113415082177E-01 \\
  & 3 & 0  & 6.43551938319983E-48  \\
  &   & 1  & 5.07813569758861E+00  \\
  &   & 2  & 8.12601038855762E-46  \\
  &   & 3  & -3.07329918137186E+01 \\
  &   & 4  & -1.60198474678427E-44 \\
  &   & 5  & 1.44109746812809E+02  \\
  &   & 6  & 1.07463154460511E-43  \\
  &   & 7  & -4.59661688826142E+02 \\
  &   & 8  & -3.63448723044512E-43 \\
  &   & 9  & 1.02152064470459E+03  \\
  &   & 10 & 7.25207125369784E-43  \\
  &   & 11 & -1.62056256708877E+03 \\
  &   & 12 & -9.27306397853655E-43 \\
  &   & 13 & 1.86467646416570E+03  \\
  &   & 14 & 7.95843097354065E-43  \\
  &   & 15 & -1.56749300877143E+03 \\
  &   & 16 & -4.69190103147527E-43 \\
  &   & 17 & 9.60970309093422E+02  \\
  &   & 18 & 1.90863349654016E-43  \\
  &   & 19 & -4.24326161871646E+02 \\
  &   & 20 & -5.27439678020696E-44 \\
  &   & 21 & 1.31278509256003E+02  \\
  &   & 22 & 9.47044937974786E-45  \\
  &   & 23 & -2.69812576626115E+01 \\
  &   & 24 & -9.98181561763750E-46 \\
  &   & 25 & 3.30651387315565E+00  \\
  &   & 26 & 4.69390466192199E-47  \\
  &   & 27 & -1.82742944627533E-01 \\ \hline
14 & 1 & 0  & -3.38572283433492E-47 \\
  &   & 1  & 2.49052143193754E+01  \\
  &   & 2  & 7.67064296707865E-45  \\
  &   & 3  & -6.82383057582430E+02 \\
  &   & 4  & -1.33318527258859E-43 \\
  &   & 5  & 6.80942845390599E+03  \\
  &   & 6  & 9.19464568002043E-43  \\
  &   & 7  & -3.12507100017105E+04 \\
  &   & 8  & -3.02547883089949E-42 \\
  &   & 9  & 7.47659388363757E+04  \\
  &   & 10 & 5.02426027571770E-42  \\
  &   & 11 & -9.65046838475839E+04 \\
  &   & 12 & -4.05931240321443E-42 \\
  &   & 13 & 6.36977923778246E+04  \\
  &   & 14 & 1.26671427827897E-42  \\
  &   & 15 & -1.68602621347190E+04 \\
  & 2 & 0  & -9.27991756967991E-46 \\
  &   & 1  & 1.68285511926011E+01  \\
  &   & 2  & 8.32408114686671E-44  \\
  &   & 3  & -3.39811750495659E+02 \\
  &   & 4  & -1.27756566625811E-42 \\
  &   & 5  & 2.79069998793847E+03  \\
  &   & 6  & 7.70152836729131E-42  \\
  &   & 7  & -1.13514151573790E+04 \\
  &   & 8  & -2.41159918805990E-41 \\
  &   & 9  & 2.66230010283745E+04  \\
  &   & 10 & 4.48807056213874E-41  \\
  &   & 11 & -3.93840328661975E+04 \\
  &   & 12 & -5.34821622972202E-41 \\
  &   & 13 & 3.87884230348060E+04  \\
  &   & 14 & 4.25722502798559E-41  \\
  &   & 15 & -2.62395303844988E+04 \\
  &   & 16 & -2.31146624263347E-41 \\
  &   & 17 & 1.23656207016532E+04  \\
  &   & 18 & 8.58571463533718E-42  \\
  &   & 19 & -4.05336460089999E+03 \\
  &   & 20 & -2.14564940301255E-42 \\
  &   & 21 & 9.06042880951087E+02  \\
  &   & 22 & 3.44803367899992E-43  \\
  &   & 23 & -1.31687649208288E+02 \\
  &   & 24 & -3.21717059336602E-44 \\
  &   & 25 & 1.12176079033623E+01  \\
  &   & 26 & 1.32425600403443E-45  \\
  &   & 27 & -4.24938020467471E-01 \\
  & 3 & 0  & 6.72874968716530E-48  \\
  &   & 1  & 5.31755497689391E+00  \\
  &   & 2  & 5.68199275801086E-46  \\
  &   & 3  & -3.54371531531577E+01 \\
  &   & 4  & -1.35187813155454E-44 \\
  &   & 5  & 1.84122441329140E+02  \\
  &   & 6  & 1.05531766289589E-43  \\
  &   & 7  & -6.55386830146253E+02 \\
  &   & 8  & -4.14266518871760E-43 \\
  &   & 9  & 1.63878335428060E+03  \\
  &   & 10 & 9.63097361166316E-43  \\
  &   & 11 & -2.95386237048226E+03 \\
  &   & 12 & -1.44556688409360E-42 \\
  &   & 13 & 3.90806423362418E+03  \\
  &   & 14 & 1.47265013864485E-42  \\
  &   & 15 & -3.83496739165131E+03 \\
  &   & 16 & -1.04728251169615E-42 \\
  &   & 17 & 2.79960654766517E+03  \\
  &   & 18 & 5.26108728786276E-43  \\
  &   & 19 & -1.51286231886692E+03 \\
  &   & 20 & -1.86083902222546E-43 \\
  &   & 21 & 5.96160139340009E+02  \\
  &   & 22 & 4.53644110199468E-44  \\
  &   & 23 & -1.66321739302958E+02 \\
  &   & 24 & -7.25782287655313E-45 \\
  &   & 25 & 3.10988369739884E+01  \\
  &   & 26 & 6.85800520634485E-46  \\
  &   & 27 & -3.49349374506190E+00 \\
  &   & 28 & -2.89849811206637E-47 \\
  &   & 29 & 1.78142156956495E-01  \\
  
\bottomrule
\end{longtable}

\section{Proofs of theorems and a lemma}

\paragraph{Proof of Theorem \ref{ReLU_thm}.}
\begin{proof}
From the definition, $p_\alpha (x)$ satisfies the following inequality:
%먼저 $p_\alpha(a,b)$ 정의에 의해 $p_\alpha (a,b)$ 는 다음 부등식을 만족시킨다.
\begin{align*}
|m_\alpha(a,b) - \max(a,b)| = |\frac{(a+b)+(a-b)p_\alpha (a-b)}{2} - \max(a,b) | \leq 2^{-\alpha},
\end{align*}
for $a,b \in [0,1]$. Then, for $x \in [0,1]$, we have $|r_\alpha (x) - \ReLU(x)| = |\frac{x+x p_\alpha(x)}{2} - x| = |m_\alpha(x,0) - \max(x,0)| \leq 2^{-\alpha}$. In addition, for $x \in [-1,0]$, we have $|r_\alpha (x) - \ReLU(x)| = |\frac{x+x p_\alpha(x)}{2}| = |\frac{-x+x p_\alpha(x)}{2} + x| = |m_\alpha(0,-x) - \max(0,-x)| \leq 2^{-\alpha}$. Thus, we have $|r_\alpha(x) - \ReLU(x)| \leq 2^{-\alpha} \mbox{~for }x \in [-1,1]$.
%$x \in [0,1]$에 대해, $|r_\alpha (x) - \ReLU(x)| = |\frac{x+x p_\alpha(x)}{2} - x| = |m_\alpha(x,0) - \max(x,0)| \leq 2^{-\alpha}$가 성립한다. 또한, $x \in [-1,0]$에 대해, $|r_\alpha (x) - \ReLU(x)| = |\frac{x+x p_\alpha(x)}{2}| = |\frac{-x+x p_\alpha(x)}{2} + x| = |m_\alpha(0,-x) - \max(0,-x)| \leq 2^{-\alpha}$가 성립한다. 따라서, $|r_\alpha(x) - \ReLU(x)| \leq 2^{-\alpha} \mbox{~for }x \in [-1,1]$이 성립한다.
\end{proof}

\paragraph{Proof of Theorem \ref{maxpooling_thm}.}
\begin{proof}

To prove this theorem, we require two lemmas.\\
(Lemma (a)) For $n \in \mathbb{N}$, let $A$ and $B$ satisfy 
\begin{equation}\notag
(\lceil \log_2 n \rceil -1) 2^{-\alpha} \leq A < B \leq 1 - (\lceil \log_2 n \rceil -1) 2^{-\alpha}.
\end{equation}
Then, for $x_1, x_2, \cdots, x_n \in [A,B]$, the following is satisfied:
\begin{equation}\notag
M_{\alpha,n}(x_1, \cdots, x_n) \in [A-\lceil \log_2 n \rceil 2^{-\alpha}, B+\lceil \log_2 n \rceil 2^{-\alpha}].
\end{equation}

(Proof) We will use mathematical induction to show that Lemma (a) holds for all $n \in N$. For $n=1$, it is trivial because $M_{\alpha,n}(x_1, \cdots, x_n) \in [A-\lceil \log_2 n \rceil 2^{-\alpha}, B+\lceil \log_2 n \rceil 2^{-\alpha}]$ if and only if $x_1 \in [A,B]$.

For $n=2$, we have $0 \leq A < B \leq 1$. We have to show that $M_{\alpha,2}(x_1,x_2) = m_\alpha (x_1, x_2) \in [A-2^{-\alpha},B+2^{-\alpha}]$ for $x_1, x_2 \in [A,B]$. We note that 
\begin{equation}\label{m_alpha_condition}
|m_\alpha (a,b) - \max(a,b)|\leq 2^{-\alpha}~~ \mbox{for}~ a,b \in [0,1].    
\end{equation}
Because $|m_\alpha (x_1,x_2) - \max(x_1,x_2)|\leq 2^{-\alpha}$, we have 
\begin{align*}
-2^{-\alpha}+A &\leq m_\alpha (x_1,x_2) - \max(x_1,x_2) + A\\
&\leq m_\alpha (x_1,x_2) - \max(x_1,x_2) + \max(x_1,x_2) \\
&= m_\alpha (x_1,x_2).
\end{align*}
 Also, we have 
\begin{align*}
m_\alpha (x_1,x_2) &= m_\alpha (x_1,x_2) - \max(x_1,x_2) + \max(x_1,x_2) \\
&\leq 2^{-\alpha}+\max(x_1,x_2) \\
&\leq 2^{-\alpha} +B.
\end{align*}
Thus, Lemma (a) holds for $n=2$. Now, we assume that Lemma (a) holds for $n$, $1 \leq n \leq m-1$ for some $m \geq 3$. It is enough to show that Lemma (a) also holds for $n=m$. \\\\
(i) $m=2k$\\
We have
\begin{equation}\label{ABrange}
(\lceil \log_2 2k \rceil -1) 2^{-\alpha} \leq A < B \leq 1- (\lceil \log_2 2k \rceil -1) 2^{-\alpha},
\end{equation}
which is equivalent to $(\lceil \log_2 k \rceil) 2^{-\alpha} \leq A < B \leq 1- (\lceil \log_2 k \rceil) 2^{-\alpha}.$ Then, we have to show that 
\begin{equation}\notag
M_{\alpha,2k}(x_1, \cdots, x_{2k}) \in [A-\lceil \log_2 2k \rceil 2^{-\alpha}, B+\lceil \log_2 2k \rceil 2^{-\alpha}],
\end{equation}
for $x_1, \cdots , x_{2k} \in [A,B]$.
Because Lemma (a) holds for $n=k$ by the intermediate induction assumption, we have
\begin{align*}
&M_{\alpha,k} (x_1,\cdots,x_k) \in [A-\lceil \log_2 k \rceil 2^{-\alpha},B+\lceil \log_2 k \rceil 2^{-\alpha}] ~~\mbox{and}\\
&M_{\alpha,k} (x_{k+1},\cdots,x_{2k}) \in [A-\lceil \log_2 k \rceil 2^{-\alpha},B+\lceil \log_2 k \rceil 2^{-\alpha}].
\end{align*}

From the inequality in (\ref{ABrange}), we have $0 \leq A-\lceil \log_2 k \rceil 2^{-\alpha}$ and $B+\lceil \log_2 k \rceil 2^{-\alpha} \leq 1$. Thus, we have 
\[
M_{\alpha,k} (x_1,\cdots,x_k), M_{\alpha,k} (x_{k+1},\cdots,x_{2k}) \in [A-\lceil \log_2 k \rceil 2^{-\alpha},B+\lceil \log_2 k \rceil 2^{-\alpha}] \subseteq [0,1].
\]
Then, from Lemma (a) for $n=2$, we have
\begin{align*}
M_{\alpha,2k}(x_1, \cdots, x_{2k}) &= m_\alpha (M_{\alpha,k} (x_1,\cdots,x_k), M_{\alpha,k} (x_{k+1},\cdots,x_{2k})) \\
&\in [A-\lceil \log_2 k \rceil 2^{-\alpha} - 2^{-\alpha},B+\lceil \log_2 k \rceil 2^{-\alpha} + 2^{-\alpha}] \\
&= [A-\lceil \log_2 2k \rceil 2^{-\alpha},B+\lceil \log_2 2k \rceil 2^{-\alpha}]
\end{align*}
Thus, Lemma (a) holds for $n=m=2k$.\\

(ii) $m=2k+1$\\
We have
\begin{equation}\notag
(\lceil \log_2 (2k+1) \rceil -1) 2^{-\alpha} \leq A < B \leq 1- (\lceil \log_2 (2k+1) \rceil -1) 2^{-\alpha}.
\end{equation}
This is equivalent to 
\begin{equation}\label{ABrange2}
    (\lceil \log_2 (k+1) \rceil) 2^{-\alpha} \leq A < B \leq 1- (\lceil \log_2 (k+1) \rceil) 2^{-\alpha}
\end{equation}
because $\lceil \log_2 (2k+1) \rceil = \lceil \log_2 (2k+2) \rceil$ for every integer $k\geq1$. Then, we have to show that 
\begin{equation}\notag
M_{\alpha,2k+1}(x_1, \cdots, x_{2k+1}) \in [A-\lceil \log_2 (2k+1) \rceil 2^{-\alpha}, B+\lceil \log_2 (2k+1) \rceil 2^{-\alpha}],
\end{equation}
for $x_1, \cdots , x_{2k+1} \in [A,B]$.
Because Lemma (a) holds for $n=k$ and $n=k+1$ by the intermediate induction assumption, we have
\begin{align*}
&M_{\alpha,k} (x_1,\cdots,x_k) \in [A-\lceil \log_2 k \rceil 2^{-\alpha},B+\lceil \log_2 k \rceil 2^{-\alpha}] ~~\mbox{and}\\
&M_{\alpha,k+1} (x_{k+1},\cdots,x_{2k+1}) \in [A-\lceil \log_2 (k+1) \rceil 2^{-\alpha},B+\lceil \log_2 (k+1) \rceil 2^{-\alpha}].
\end{align*}

From the inequality in (\ref{ABrange2}) by the intermediate induction assumption, we have $0 \leq A-\lceil \log_2 (k+1) \rceil 2^{-\alpha}$ and $B+\lceil \log_2 (k+1) \rceil 2^{-\alpha} \leq 1$. Thus, we have 
\begin{align*}
    M_{\alpha,k} (x_1,\cdots,x_k), M_{\alpha,k+1} (&x_{k+1},\cdots,x_{2k+1}) \\
    &\in [A-\lceil \log_2 (k+1) \rceil 2^{-\alpha},B+\lceil \log_2 (k+1) \rceil 2^{-\alpha}] \\
    &\subseteq [0,1].
\end{align*}
Then, from Lemma (a) for $n=2$, we have
\begin{align*}
M_{\alpha,2k+1}(x_1, \cdots, x_{2k+1}) &= m_\alpha (M_{\alpha,k} (x_1,\cdots,x_k), M_{\alpha,k+1} (x_{k+1},\cdots,x_{2k+1})) \\
&\in [A-\lceil \log_2 (k+1) \rceil 2^{-\alpha} - 2^{-\alpha},B+\lceil \log_2 (k+1) \rceil 2^{-\alpha} + 2^{-\alpha}] \\
&= [A-\lceil \log_2 (2k+2) \rceil 2^{-\alpha},B+\lceil \log_2 (2k+2) \rceil 2^{-\alpha}]\\
&= [A-\lceil \log_2 (2k+1) \rceil 2^{-\alpha},B+\lceil \log_2 (2k+1) \rceil 2^{-\alpha}]
\end{align*}
Thus, Lemma (a) holds for $n=m=2k+1$. Then, Lemma (a) holds for all $n\geq 1$ by mathematical induction. \qed\\

(Lemma (b)) For $a,b,c,d \in \mathbb{R}$, we have
\begin{equation}\label{lemma_b}
|\max(a,b)-\max(c,d)| \leq \max(|a-c|,|b-d|)
\end{equation}

(Proof) Let $\max(a,b) = a$ without loss of generality. We denote the left-hand side and right-hand side of the inequality in (\ref{lemma_b}) by $\textup{LHS}$ and $\textup{RHS}$, respectively.
\begin{enumerate}
\item $\max(c,d) = c$\\
 We have $\textup{LHS} = |a-c|$. Thus, the lemma holds in this case.
\item $\max(c,d) = d$\\
We have $\textup{LHS} = |a-d|$. 
\begin{enumerate}
    \item $a \geq d$\\
    We have $a \geq d \geq c$. Thus, we have $\textup{LHS} = |a-d| \leq |a-c| \leq \textup{RHS}$. \item $a < d$\\
    We have $b \leq a < d$. Thus, we have $\textup{LHS} = |a-d| \leq |b-d| \leq \textup{RHS}$.
    
\end{enumerate}
\end{enumerate}
Thus, the lemma (b) is proved. \qed\\

Now, we prove the theorem using Lemma (a) and Lemma (b). We use mathematical induction to show that the inequality in (\ref{maxpool_ineq}) holds for all $n \in \mathbb{N}$. First, for $n=1$, we have
\[
|M_{\alpha,1}(x_1) - \max(x_1)| = |x_1 - x_1|=0=2^{-\alpha}\lceil \log_2 1 \rceil.
\]
Therefore, the inequality in (\ref{maxpool_ineq}) holds for $n=1$. 
We assume that the inequality holds for all $n$, $1 \leq n \leq m-1$.
Then, it is enough to show that the inequality in (\ref{maxpool_ineq}) also holds for $n = m$.
Suppose that $x_1,\cdots,x_{m}\in[(\lceil \log_2 m \rceil-1)2^{-\alpha},1-(\lceil \log_2 m \rceil-1)2^{-\alpha}] $.\\

(i) $m=2k$\\
We have to show that
\begin{equation}\notag
\begin{aligned}[b]
|M_{\alpha,2k}(x_1 , \cdots,x_{2k}) -& \max(x_1 , \cdots,x_{2k})| \leq 2^{-\alpha} \lceil \log_2 2k \rceil \\
&\mathrm{for}\:x_1,\cdots,x_{2k} \in [(\lceil \log_2 2k \rceil-1)2^{-\alpha},1-(\lceil \log_2 2k \rceil-1)2^{-\alpha}].
\end{aligned}
\end{equation}
We have 
$M_{\alpha,2k}(x_1,\cdots,x_{2k})=m_\alpha(M_{\alpha,k}(x_1,\cdots,x_k),M_{\alpha,k}(x_{k+1},\cdots,x_{2k})).$

Let $P=\max(x_1 , \cdots,x_{k})$, $Q=\max(x_{k+1} , \cdots,x_{2k})$, $\Tilde{P}=M_{\alpha,k}(x_1 , \cdots,x_{k})$, and $\Tilde{Q}=M_{\alpha,k}(x_{k+1} , \cdots,x_{2k})$.
Since 
\begin{align*}
x_1,\cdots,x_{2k} &\in [(\lceil \log_2 2k \rceil-1)2^{-\alpha},1-(\lceil \log_2 2k \rceil-1)2^{-\alpha}] \\
&\subseteq [(\lceil \log_2 k \rceil-1)2^{-\alpha},1-(\lceil \log_2 k \rceil-1)2^{-\alpha}],
\end{align*}
we can apply the intermediate induction assumption for $n=k$ as
%$x_1,\cdots,x_{m}\in[(\lceil \log_2 k \rceil-1)2^{-\alpha},1-(\lceil \log_2 k \rceil-1)2^{-\alpha}] $, we can apply the intermediate induction assumption for $n=k$ as
\begin{align*}
|\Tilde{P}-P|=|M_{\alpha,k}(x_1 , \cdots,x_{k}) - &\max(x_1 , \cdots,x_{k})| \leq 2^{-\alpha} \lceil \log_2 k \rceil, \\
|\Tilde{Q}-Q|=|M_{\alpha,k}(x_{k+1} , \cdots,x_{2k}) - &\max(x_{k+1} , \cdots,x_{2k})| \leq 2^{-\alpha} \lceil \log_2 k \rceil.
\end{align*}
The left-hand side of the inequality in (\ref{maxpool_ineq}) for $n=m$ becomes $|m_\alpha (\Tilde{P}, \Tilde{Q})-\max(P,Q)|$. From Lemma (a) for $n=k$, we have $\Tilde{P}, \Tilde{Q} \in [0,1]$. Then, we have
\begin{align*}
|m_\alpha (\Tilde{P}, \Tilde{Q})-&\max(P,Q)| \leq |m_\alpha (\Tilde{P}, \Tilde{Q})-\max(\Tilde{P},\Tilde{Q})|+|\max(\Tilde{P},\Tilde{Q})-\max(P,Q)| \\
&\leq |m_\alpha (\Tilde{P}, \Tilde{Q})-\max(\Tilde{P},\Tilde{Q})| + \max(|\Tilde{P}-P|,|\Tilde{Q}-Q|) ~~~~\mbox{(from Lemma (b))}\\
&\leq 2^{-\alpha}+ \max(|\Tilde{P}-P|,|\Tilde{Q}-Q|) ~~~~\mbox{(from (\ref{m_alpha_condition}))} \\
&\leq 2^{-\alpha}+ 2^{-\alpha} \lceil \log_2 k \rceil \\ 
&=2^{-\alpha} \lceil \log_2 2k \rceil 
=2^{-\alpha} \lceil \log_2 m \rceil. 
\end{align*}

(ii) $m=2k+1$\\
We have to show that
\begin{align*}
|M_{\alpha,2k+1}(x_1 , \cdots,&x_{2k+1}) - \max(x_1 , \cdots,x_{2k+1})| \leq 2^{-\alpha} \lceil \log_2 (2k+1) \rceil \\
&\mathrm{for}\:x_1,\cdots,x_{2k+1} \in [(\lceil \log_2 (2k+1) \rceil-1)2^{-\alpha},1-(\lceil \log_2 (2k+1) \rceil-1)2^{-\alpha}].
\end{align*}

We have 
$M_{\alpha,2k+1}(x_1,\cdots,x_{2k+1})=m_\alpha(M_{\alpha,k}(x_1,\cdots,x_k),M_{\alpha,k+1}(x_{k+1},\cdots,x_{2k+1})).$

Let $P=\max(x_1 , \cdots,x_{k})$, $Q=\max(x_{k+1} , \cdots,x_{2k+1})$, $\Tilde{P}=M_{\alpha,k}(x_1 , \cdots,x_{k})$, and $\Tilde{Q}=M_{\alpha,k+1}(x_{k+1} , \cdots,x_{2k+1})$.
Since 
\begin{align*}
x_1,\cdots,x_k &\in [(\lceil \log_2 (2k+1) \rceil-1)2^{-\alpha},1-(\lceil \log_2 (2k+1) \rceil-1)2^{-\alpha}] \\
&\subseteq [(\lceil \log_2 k \rceil-1)2^{-\alpha},1-(\lceil \log_2 k \rceil-1)2^{-\alpha}] ~~\mbox{and}
\end{align*}
\begin{align*}
x_{k+1},\cdots,x_{2k+1} &\in [(\lceil \log_2 (2k+1) \rceil-1)2^{-\alpha},1-(\lceil \log_2 (2k+1) \rceil-1)2^{-\alpha}] \\
&\subseteq [(\lceil \log_2 (k+1) \rceil-1)2^{-\alpha},1-(\lceil \log_2 (k+1) \rceil-1)2^{-\alpha}],
\end{align*}
we can apply the induction assumption for $n=k$ and $n=k+1$ as
\begin{align*}
|\Tilde{P}-P|=|M_{\alpha,k}(x_1 , \cdots,x_{k}) - &\max(x_1 , \cdots,x_{k})| \leq 2^{-\alpha} \lceil \log_2 k \rceil, \\
|\Tilde{Q}-Q|=|M_{\alpha,k+1}(x_{k+1} , \cdots,x_{2k+1}) - &\max(x_{k+1} , \cdots,x_{2k+1})| \leq 2^{-\alpha} \lceil \log_2 (k+1) \rceil.
\end{align*}

The left-hand side of the inequality in (\ref{maxpool_ineq}) for $n=m$ becomes $|m_\alpha (\Tilde{P}, \Tilde{Q})-\max(P,Q)|$. From Lemma (a) for $n=k$ and $n=k+1$, we have $\Tilde{P}, \Tilde{Q} \in [0,1]$. Then, we have 
\begin{align*}
|m_\alpha (\Tilde{P}, \Tilde{Q})-&\max(P,Q)| \leq |m_\alpha (\Tilde{P}, \Tilde{Q})-\max(\Tilde{P},\Tilde{Q})|+|\max(\Tilde{P},\Tilde{Q})-\max(P,Q)| \\
&\leq |m_\alpha (\Tilde{P}, \Tilde{Q})-\max(\Tilde{P},\Tilde{Q})| + \max(|\Tilde{P}-P|,|\Tilde{Q}-Q|) ~~~~\mbox{(from Lemma (b))}\\
&\leq 2^{-\alpha}+ \max(|\Tilde{P}-P|,|\Tilde{Q}-Q|) ~~~~\mbox{(from (\ref{m_alpha_condition}))} \\
&\leq 2^{-\alpha}+ 2^{-\alpha} \lceil \log_2 (k+1) \rceil \\
&= 2^{-\alpha} \lceil \log_2 (2k+2) \rceil=2^{-\alpha} \lceil \log_2 (2k+1) \rceil
=2^{-\alpha} \lceil \log_2 m \rceil
\end{align*}
since $\lceil \log_2 (2k+2) \rceil=\lceil \log_2 (2k+1) \rceil$ for every integer $k\geq 1$.
Thus, the inequality in (\ref{maxpool_ineq}) holds for $n = m$, and the theorem is proved by mathematical induction.
\end{proof}

\paragraph{Proof of Theorem \ref{errorprop}.}
\begin{proof}
Equivalently, we have to show that 
\begin{equation}\label{claim_ineq}
    \|A_n^\alpha \circ \cdots \circ A_0^\alpha (\bb{x}) -  A_n \circ \cdots \circ  A_0 (\bb{x}) \|_{\infty} \leq (E_{A_n} ^{\alpha} \circ E_{A_{n-1}} ^{\alpha} \circ \cdots \circ E_{A_0} ^{\alpha})(0).
\end{equation}

We will prove it by mathematical induction. First, we will prove for $n=0$.
\begin{align*}
\|A_0^\alpha (\bb{x}) - A_0(\bb{x})\|_{\infty} \leq & \sup_{\|\bb{x}\|_\infty \leq B} \|A_0^\alpha (\bb{x}) - A_0 (\bb{x})\|_\infty \\
= &\sup_{\|\bb{x}+\bb{e}\|_\infty \leq B, \|\bb{e}\|_\infty \leq 0} \|A_0^\alpha (\bb{x}+\bb{e}) - A_0 (\bb{x})\|_\infty = E_{A_0}^\alpha (0).
\end{align*}
Thus, the inequality in (\ref{claim_ineq}) holds for $n=0$.

Next, we assume that the inequality in (\ref{claim_ineq}) holds for $n=k$, that is,
\[
\|A_k^\alpha \circ \cdots \circ A_0^\alpha (\bb{x}) - A_k \circ \cdots \circ  A_0 (\bb{x})\|_{\infty} \leq (E_{A_k} ^{\alpha} \circ E_{A_{k-1}} ^{\alpha} \circ \cdots \circ E_{A_0} ^{\alpha})(0)
\]
for some $k \geq 0$.
It is enough to show that the inequality in (\ref{claim_ineq}) also holds for $n=k+1$, that is, 
\[
\|A_{k+1}^\alpha \circ \cdots \circ A_0^\alpha (\bb{x}) - A_{k+1} \circ \cdots \circ  A_0 (\bb{x})\|_{\infty} \leq (E_{A_{k+1}} ^{\alpha} \circ E_{A_{k}} ^{\alpha} \circ \cdots \circ E_{A_0} ^{\alpha})(0).
\]

We have 
\begin{align*}
&\|A_{k+1}^\alpha \circ \cdots \circ A_0^\alpha (\bb{x}) - A_{k+1} \circ \cdots \circ  A_0 (\bb{x})\|_{\infty} \\
=& \|A_{k+1}^\alpha ( A_k \circ \cdots \circ A_0 (\bb{x}) + (A_k^\alpha \circ \cdots \circ A_1^\alpha (\bb{x}) - A_k \circ \cdots \circ A_0 (\bb{x})) ) - A_{k+1} \circ \cdots \circ  A_0 (\bb{x})\|_{\infty}.
\end{align*}
Let $\bb{x}' = A_k \circ \cdots \circ A_0 (\bb{x})$ and $\bb{e}' = A_k^\alpha \circ \cdots \circ A_0^\alpha (\bb{x}) - A_k \circ \cdots \circ A_0 (\bb{x})$.
Because $\|\bb{x}'+\bb{e}'\|_\infty = \|A_k^\alpha \circ \cdots \circ A_0^\alpha (\bb{x}) \|_\infty \leq B$,
we have 
\begin{align*}
\|A_{k+1}^\alpha \circ \cdots \circ A_0^\alpha (\bb{x}) - &A_{k+1} \circ \cdots \circ  A_0 (\bb{x})\|_{\infty} = \|A_{k+1}^\alpha (\bb{x}'+\bb{e}') - A_{k+1} (\bb{x}') \|_{\infty} \\
&\leq \sup_{\|\bb{x}+\bb{e}\|_\infty \leq B, \|\bb{e}\|_\infty \leq \|\bb{e}'\|_\infty} \| A_{k+1}^\alpha(\bb{x}+\bb{e}) - A_{k+1}(\bb{x}) \|_\infty \\
&= E_{A_{k+1}}^\alpha (\|\bb{e}'\|_\infty) \\
&= E_{A_{k+1}}^\alpha (\|A_k^\alpha \circ \cdots \circ A_1^\alpha (\bb{x}) - A_k \circ \cdots \circ A_1 (\bb{x})\|_{\infty})\\
&\leq E_{A_{k+1}}^\alpha (E_{A_k} ^{\alpha} \circ E_{A_{k-1}} ^{\alpha} \circ \cdots \circ E_{A_0} ^{\alpha})(0) ~~~~( \because E_{A_{k+1}}^\alpha(\cdot): ~\text{increasing})\\
&= E_{A_{k+1}} ^{\alpha} \circ E_{A_{k}} ^{\alpha} \circ \cdots \circ E_{A_0} ^{\alpha}(0).
\end{align*}
Thus, the inequality in (\ref{claim_ineq}) holds for $n=k+1$, and the theorem is proved by mathematical induction.

\end{proof}

\paragraph{Proof of Lemma \ref{basiclayers}.}
\begin{proof}
(a) For $A(\bb{x})=\bb{Ax}+\bb{b}$, $A^\alpha = A$. Then $A^\alpha(\bb{x}+\bb{e})-A(\bb{x})=\bb{Ae}$. Therefore, 
\begin{align*}
E_A^\alpha(e) &= \sup_{\|\bb{e}\|_\infty \leq e} \|\bb{A}\bb{e}\|_\infty 
 \\
 &\leq \sup_{\|\bb{e}\|_\infty \leq e} \|\bb{A}\|_\infty \|\bb{e}\|_\infty = \|\bb{A}\|_\infty e.
 \end{align*}

(b) For $A(\bb{x})=\ReLU(\bb{x})$, $A^\alpha(\bb{x})=\Tilde{r}_{\alpha,B} (\bb{x})$.
Therefore, 
\begin{align*}
E_A^\alpha(e) &= \sup_{\|\bb{x}+\bb{e}\|_\infty \leq B,\|\bb{e}\|_\infty \leq e} \|\Tilde{r}_{\alpha,B} (\bb{x} + \bb{e}) - \ReLU(\bb{x})\|_\infty \\
&\leq \sup_{\|\bb{x}+\bb{e}\|_\infty \leq B,\|\bb{e}\|_\infty \leq e} (\|\Tilde{r}_{\alpha,B} (\bb{x} + \bb{e}) - \ReLU(\bb{x}+\bb{e})\|_\infty + \| \ReLU(\bb{x}+\bb{e}) - \ReLU(\bb{x})\|_\infty ) \\
& \leq \sup_{\|\bb{x}+\bb{e}\|_\infty \leq B,\|\bb{e}\|_\infty \leq e} (B \cdot 2^{-\alpha}+\|\bb{e}\|_\infty) \\
&= B \cdot 2^{-\alpha}+e.
\end{align*}

(c) If $A$ is a max-pooling block with kernel size $k_0$, $\sup_{\|\bb{x}+\bb{e}\|_\infty \leq B,\|\bb{e}\|_\infty \leq e}\| A^\alpha (\bb{x}+\bb{e}) - A(\bb{x}) \|_\infty$ becomes \[
\sup_{\|\bb{x}+\bb{e}\|_\infty \leq B,\|\bb{e}\|_\infty \leq e} |\Tilde{M}_{\alpha,k_0^2,B}(x_1+e_1,\cdots,x_{k_0^2}+e_{k_0^2})-\max(x_1,\cdots,x_{k_0^2})|,
\]
where $\bb{x}=(x_1,\cdots,x_{k_0^2})$ and $\bb{e}=(e_1,\cdots,e_{k_0^2})$. 
Similar technique of proof for (b) and from Theorem \ref{maxpooling_thm}, we have $E_A^\alpha(e) \leq B'\lceil \log_2 k_0^2 \rceil 2^{-\alpha}  + e$, where $B' = B/(0.5-(\lceil \log_2 k_0^2 \rceil-1)2^{-\alpha})$.
Since $k_0 \leq 10$ and $\alpha \geq 4$, $(\lceil \log_2 k_0^2 \rceil-1)2^{-\alpha} <0.4$, therefore $B' <10B$, which leads the conclusion.

(d) Let us denote softmax block $A$ as an $\mathbb{R}^N \rightarrow \mathbb{R}^N$ function with $A(\bb{x})=(\frac{\exp(x_i)}{\sum_j \exp(x_j)})_{1\leq i \leq N}$ for $\bb{x}=(x_1,\cdots,x_N)$. Then mean value theorem for multivariate variable functions gives
\[
\|A(\bb{x}')-A(\bb{x})\|_\infty \leq \sup_{\bb{z}\in[\bb{x}',\bb{x}]}\|\bb{J}(\bb{z})\|_\infty\cdot \|\bb{x}'-\bb{x}\|_\infty,
\]
where $\bb{J}(\bb{z})$ denotes the Jacobian matrix of softmax block $A$. 
For a given vector $\bb{z}=(z_1,\cdots,z_N)\in \mathbb{R}^N$, the infinity norm of $\bb{J}(\bb{z})$ is given as $\|\bb{J}(\bb{z})\|_\infty = \max_i \sum_k |\frac{\partial}{\partial z_k } \frac{\exp(z_i)}{\sum_j \exp(z_j)}|$.
Note that
\[
\left | \frac{\partial}{\partial z_k } \frac{\exp(z_i)}{\sum_j \exp(z_j)} \right | = 
\begin{cases}
\cfrac{\exp(z_i)\exp(z_k)}{(\sum_j \exp(z_j))^2}, & k \neq i\\
\cfrac{\exp(z_i)\sum_{j\neq i}\exp(z_j)}{(\sum_j \exp(z_j))^2}, & k = i
\end{cases}
\]
and thus
\begin{align*}
\sum_k \left |\frac{\partial}{\partial z_k } \frac{\exp(z_i)}{\sum_j \exp(z_j)} \right | = \frac{\exp(z_i)\sum_{k\neq i}\exp(z_k)}{(\sum_j \exp(z_j))^2}+ \frac{\exp(z_i)\sum_{j\neq i}\exp(z_j)}{(\sum_j \exp(z_j))^2} =2p_i(1-p_i),
\end{align*}
where $p_i=\frac{\exp(z_i)}{\sum_j \exp(z_j)}$. Therefore, $\|\bb{J}(\bb{z})\|_\infty = \max_i 2p_i(1-p_i) \leq 1/2 $ for any vector $\bb{z}\in\mathbb{R}^N$, and this gives
\[
\|A^\alpha(\bb{x}+\bb{e})-A(\bb{x})\|_\infty=\|A(\bb{x}+\bb{e})-A(\bb{x})\|_\infty  \leq \|\bb{e}\|_\infty/2,
\]
which leads the conclusion.
\end{proof}

\paragraph{Proof of Theorem \ref{main}.}
\begin{proof}
From Theorem \ref{errorprop}, it is enough to show that 
\begin{equation}\label{thm4_claim_ineq}
(E_{A_n} ^{\alpha} \circ \cdots \circ E_{A_0} ^{\alpha})(0) \leq C 2^{-\alpha}
\end{equation}
for some constant $C$ and $n \geq 0$. To show that this inequality holds for all $n \geq 0$, we use mathematical induction.
First, for $n=0$, assume that $\mathcal{F}$ has one block $A$.
The upper bounds of $E_A^{\alpha}(0)$ for the four basic blocks suggested in Lemma \ref{basiclayers} have forms of $C_0 \cdot 2^{-\alpha}$, where $C_0$ can be zero. Thus, the inequality in (\ref{thm4_claim_ineq}) holds for some constant $C$ and $n=0$.

Then, we assume that the inequality in (\ref{thm4_claim_ineq}) holds for $n=k$ and some constant $C_k$, that is, $(E_{A_k} ^{\alpha} \circ \cdots \circ E_{A_0} ^{\alpha})(0) \leq C_k 2^{-\alpha}$ for some $C_k$. 
Then, it is enough to show that the inequality in (\ref{thm4_claim_ineq}) holds for $n=k+1$ and some constant $C_{k+1}$. $E_{A_{k+1}}^\alpha(C_k 2^\alpha)$ is not greater than 
\begin{enumerate}[label=(\roman*)]
\item $\| \bb{A} \|_\infty C_k 2^{-\alpha}$ when $A_{k+1}$ is a linear block with $A(\bb{x})=\bb{Ax}+\bb{b}$, 
\item $(B + C_k) 2^{-\alpha}$ when $A_{k+1}$ is a ReLU block, 
\item $(10B\lceil \log_2 k_0^2 \rceil + C_k) 2^{-\alpha}$ when $A_{k+1}$ is a max-pooling block with kernel size $k_0$,
\item $\frac{1}{2}C_k 2^{-\alpha}$ when $A_{k+1}$ is a softmax block
\end{enumerate}
by Lemma \ref{basiclayers}. For each case, we can determine a constant $C_{k+1}$ that satisfies $(E_{A_{k+1}} ^{\alpha} \circ \cdots \circ E_{A_0}^\alpha) (0) \leq E_{A_{k+1}}^\alpha(C_k 2^\alpha) \leq C_{k+1} 2^{-\alpha}$. Thus, the theorem is proved.
\end{proof}

\section{Generalization of Theorem 4 for ResNet model}
ResNet model cannot be decomposed of basic blocks, since it contains a ``residual block''. The authors of \cite{he2016deep} suggest an operation $R$ that satisfies
\[
R(\bb{x}) = \mathcal{G}(\bb{x}) + \bb{P} \bb{x},
\]
where $\mathcal{G}(\bb{x})$ denotes the residual mapping which will be learned, and $\bb{P}$ is a linear projection matrix to match the dimension of $\mathcal{G}(\bb{x})$. 
In this section, we call such operation $R$ as a \textit{residual block}. 
In the residual block designed in ResNet, all residual mapping $\mathcal{G}(\bb{x})$ is a composition of basic blocks \cite{he2016deep}. 
Therefore, by Theorem \ref{main}, we can determine a constant $C_\mathcal{G}$ that satisfies $\|\mathcal{G}^{\alpha} (\bb{x}) - \mathcal{G}(\bb{x})\|_\infty \leq C_\mathcal{G} 2^{-\alpha}$. In case of $\bb{P}$, there are two methods of constructing  projection in ResNet:
the first one pads extra zeros, and the second one uses $1 \times 1$ convolution \cite{he2016deep}.
We note that both methods can be considered as linear blocks, and thus the approximate block $R^\alpha(\bb{x})$ can be represented by $\mathcal{G}^\alpha(\bb{x}) + \bb{Px}$.
Considering this situation, we generalize the original Theorem \ref{main} so that it is also valid for ResNet model.

\begin{theorem}[generalized version of Theorem \ref{main}]\label{generalized}
If $\alpha \geq 4$ and a deep learning model $\mathcal{F}$ contains only basic blocks and residual blocks, then there exists a constant $C$ such that $\|\mathcal{F}^{\alpha} (\bb{x}) - \mathcal{F}(\bb{x})\|_\infty \leq C2^{-\alpha}$ for every input $\bb{x}$, where the constant $C$ can be determined only by model parameters. 
\end{theorem}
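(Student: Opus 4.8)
The plan is to fold the residual block into the basic-block framework of Theorem \ref{main}. Inspecting the proof of Theorem \ref{main}, the only property of each block that is actually used is that its error propagation function admits an affine upper bound of the form $E_A^\alpha(e) \leq \lambda_A\, e + c_A\, 2^{-\alpha}$ with constants $\lambda_A, c_A \geq 0$ depending only on the model parameters; this is exactly what Lemma \ref{basiclayers} supplies for the four basic block types (linear gives $\lambda_A=\|\bb{A}\|_\infty,\,c_A=0$; ReLU and max-pooling give $\lambda_A=1$ and a nonzero $c_A$; softmax gives $\lambda_A=1/2,\,c_A=0$). Hence it suffices to show that a residual block $R$ also satisfies such a bound; once this is done, the induction of Theorem \ref{main} applies almost verbatim, with residual blocks treated as an additional admissible block type.

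First I would compute the error propagation function of $R$. Writing $R^\alpha(\bb{x}) = \mathcal{G}^\alpha(\bb{x}) + \bb{P}\bb{x}$, for any admissible $\bb{x},\bb{e}$ we have $R^\alpha(\bb{x}+\bb{e}) - R(\bb{x}) = \bigl(\mathcal{G}^\alpha(\bb{x}+\bb{e}) - \mathcal{G}(\bb{x})\bigr) + \bb{P}\bb{e}$, so the triangle inequality together with $\|\bb{P}\bb{e}\|_\infty \leq \|\bb{P}\|_\infty \|\bb{e}\|_\infty$ and the monotonicity of $E_\mathcal{G}^\alpha$ gives $E_R^\alpha(e) \leq E_\mathcal{G}^\alpha(e) + \|\bb{P}\|_\infty\, e$.

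Next I would bound $E_\mathcal{G}^\alpha(e)$. Since $\mathcal{G} = A_m \circ \cdots \circ A_0$ is a composition of basic blocks, I would first establish a version of Theorem \ref{errorprop} that tracks a nonzero initial error, namely $E_\mathcal{G}^\alpha(e) \leq (E_{A_m}^\alpha \circ \cdots \circ E_{A_0}^\alpha)(e)$; the proof is the same induction as Theorem \ref{errorprop}, but with the base case started from $E_{A_0}^\alpha(\|\bb{e}\|_\infty)\leq E_{A_0}^\alpha(e)$ instead of $E_{A_0}^\alpha(0)$. Because each per-block bound from Lemma \ref{basiclayers} is affine in $e$ and the composition of affine maps is affine, this yields $E_\mathcal{G}^\alpha(e) \leq \Lambda\, e + C'_{\mathcal{G}}\, 2^{-\alpha}$ for constants $\Lambda, C'_{\mathcal{G}} \geq 0$ determined solely by the parameters of the blocks inside $\mathcal{G}$. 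Combining with the previous step, $E_R^\alpha(e) \leq (\Lambda + \|\bb{P}\|_\infty)\, e + C'_{\mathcal{G}}\, 2^{-\alpha}$, which is precisely the required affine form with constants independent of $\alpha$.

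Finally, with the residual block shown to satisfy such a bound, the conclusion follows from the induction in the proof of Theorem \ref{main}: assuming $(E_{A_k}^\alpha \circ \cdots \circ E_{A_0}^\alpha)(0) \leq C_k 2^{-\alpha}$, a residual block at step $k+1$ gives $E_R^\alpha(C_k 2^{-\alpha}) \leq (\Lambda + \|\bb{P}\|_\infty) C_k 2^{-\alpha} + C'_{\mathcal{G}} 2^{-\alpha} = C_{k+1} 2^{-\alpha}$, while basic blocks are handled exactly as before; in every case $C_{k+1}$ depends only on model parameters, so by Theorem \ref{errorprop} we obtain $\|\mathcal{F}^\alpha(\bb{x}) - \mathcal{F}(\bb{x})\|_\infty \leq C 2^{-\alpha}$ with $C$ determined only by model parameters. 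The step I expect to be the main obstacle is the middle one: carefully justifying the nonzero-initial-error generalization of Theorem \ref{errorprop} and checking that the $2^{-\alpha}$-scaling of the constant term genuinely survives the composition of the affine bounds. In particular, one must ensure the boundedness assumption $\|\cdot\|_\infty \leq B$ holds for all intermediate activations inside $\mathcal{G}$, in both the exact and the approximate evaluations, so that each $E_{A_i}^\alpha$ may legitimately be applied along the composition.
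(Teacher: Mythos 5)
Your proposal is correct and reaches the same affine bound $E_R^\alpha(e) \leq (\Lambda + \|\bb{P}\|_\infty)e + C'_{\mathcal{G}}2^{-\alpha}$ that the paper needs, but you obtain it by a slightly different route. The paper first applies the triangle inequality at the level of the whole residual block, splitting $\|R^\alpha(\bb{x}+\bb{e})-R(\bb{x})\|_\infty$ into an approximation term $\|\mathcal{G}^\alpha(\bb{x}+\bb{e})-\mathcal{G}(\bb{x}+\bb{e})\|_\infty \leq C_\mathcal{G}2^{-\alpha}$ (handled by Theorem \ref{main} applied to $\mathcal{G}$, which consists only of basic blocks) and an exact-propagation term $\|\mathcal{G}(\bb{x}+\bb{e})-\mathcal{G}(\bb{x})\|_\infty$; to control the latter it introduces a new quantity $\Delta_A(e) = \sup\|A(\bb{x}+\bb{e})-A(\bb{x})\|_\infty$ for the \emph{exact} blocks, proves the $\alpha\to\infty$ analogues of Lemma \ref{basiclayers} for $\Delta_A$, and composes them to get a Lipschitz constant $C'_\mathcal{G}$. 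You instead keep $\|\mathcal{G}^\alpha(\bb{x}+\bb{e})-\mathcal{G}(\bb{x})\|_\infty$ intact, bound it by $E_\mathcal{G}^\alpha(e)$, and control that via a nonzero-initial-error version of Theorem \ref{errorprop} together with the observation that each per-block bound in Lemma \ref{basiclayers} is affine in $e$ with constant term proportional to $2^{-\alpha}$, so the composition stays affine of the same form. The trade-off is that the paper must introduce and re-derive the $\Delta_A$ bounds but can quote Theorems \ref{errorprop} and \ref{main} as black boxes, while you avoid any new per-block lemmas but must re-run the induction of Theorem \ref{errorprop} from a nonzero base case. The two obstacles you flag are real and are exactly the points the paper also leans on: the monotonicity of the $E_{A_i}^\alpha$ along the composition, and the standing assumption that $B$ is large enough that all intermediate activations of both the exact and approximate evaluations of $\mathcal{G}$ lie in $[-B,B]$ (the paper assumes this globally in Section \ref{sec:theoretical_analysis}, so your argument inherits it without extra work). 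With those noted, your argument is complete.
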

\begin{proof}
We prove this statement by obtaining error propagation function $E_R^\alpha (e)$ of residual block $R(\bb{x})=\mathcal{G}(\bb{x}) + \bb{P} \bb{x}$. From the definition of error propagation function, we have
\begin{align*}
E_R^\alpha (e) &= \sup_{\|\bb{x}+\bb{e}\|_\infty \leq B,\|\bb{e}\|_\infty \leq e}\| R^\alpha(\bb{x}+\bb{e}) -R(\bb{x})\|_\infty \\
&\leq \sup_{\|\bb{x}+\bb{e}\|_\infty \leq B,\|\bb{e}\|_\infty \leq e}(\| R^\alpha(\bb{x}+\bb{e}) -R(\bb{x}+\bb{e})\|_\infty  + \| R(\bb{x}+\bb{e}) -R(\bb{x})\|_\infty ) \\
&\leq \sup_{\|\bb{x}+\bb{e}\|_\infty \leq B,\|\bb{e}\|_\infty \leq e}(\| \mathcal{G}^\alpha(\bb{x}+\bb{e}) -\mathcal{G}(\bb{x}+\bb{e})\|_\infty + \| \mathcal{G}(\bb{x}+\bb{e}) -\mathcal{G}(\bb{x})+\bb{Pe}\|_\infty ) \\
&\leq C_\mathcal{G} 2^{-\alpha} + \| \bb{P} \|_\infty e + \sup_{\|\bb{x}+\bb{e}\|_\infty \leq B,\|\bb{e}\|_\infty \leq e}  \| \mathcal{G}(\bb{x}+\bb{e}) -\mathcal{G}(\bb{x}) \|_\infty,
\end{align*}
where $C_\mathcal{G}$ is the constant that satisfies $\|\mathcal{G}^{\alpha} (\bb{x}) - \mathcal{G}(\bb{x})\|_\infty \leq C_\mathcal{G} 2^{-\alpha}$. To estimate $ \| \mathcal{G}(\bb{x}+\bb{e}) -\mathcal{G}(\bb{x}) \|_\infty$, we decompose residual mapping $\mathcal{G}$ into $G_k \circ \cdots \circ G_0$, where $G_i$'s are basic blocks. 
Then, we define
\[
\Delta_A(e) = \sup_{\|\bb{x}+\bb{e}\|_\infty \leq B,\|\bb{e}\|_\infty \leq e} \| A(\bb{x}+\bb{e})-A(\bb{x}) \|_\infty
\]
for a block $A$ and magnitude of the error, $e$. Then $\Delta_A(e)$ is a non-decreasing function of $e$ for every block $A$. 
Thus, similar argument in the proof of Theorem \ref{errorprop} shows that
\begin{equation}\label{delprop}
\|\mathcal{G}(\bb{x}+\bb{e}) - \mathcal{G}(\bb{x}) \|_\infty \leq (\Delta_{G_k} \circ \cdots \Delta_{G_0})(\| \bb{e} \|_\infty)
\end{equation}
for every error vector $\bb{e}$.
Also, similar argument in the proof of Lemma \ref{basiclayers} shows that
\begin{equation}\label{delbound}
\Delta_A(e) \leq
\begin{cases}
\| \bb{A} \|_\infty e, & A: \text{linear block, }A(\bb{x}) = \bb{Ax}+\bb{b} \\
e, & A:\text{ReLU} \\
e, & A:\text{max-pooling} \\
e/2 & A:\text{softmax} \\
\end{cases}
\end{equation}
for every basic block $A$ and $e\geq 0$. These inequalities for $\Delta_A (e)$ corresponds to the inequalities for $E_A^\alpha (e)$ in Lemma 1 when the precision parameter $\alpha$ goes to infinity. 
From two inequalities (\ref{delprop}) and (\ref{delbound}), there exists a constant $C_\mathcal{G}'$ that satisfies $\| \mathcal{G} (\bb{x}+\bb{e})-\mathcal{G}(\bb{x})\|_\infty \leq C_\mathcal{G}' \|\bb{e}\|_\infty$ since $G_i$'s are all basic blocks. 
Therefore, we have
\[
E_R^\alpha (e) \leq C_\mathcal{G} 2^{-\alpha} + (\| \bb{P} \|_\infty +  C_\mathcal{G}' )e 
\]
for every $e \geq 0$. 

Now, we prove the theorem using mathematical induction.
Let $\mathcal{F}$ be a deep learning model which can be decomposed into $A_n \circ \cdots \circ A_0$ where $A_k$'s are all basic blocks or residual blocks. 
First, for $n=0$, assume that $\mathcal{F}$ has one block $A$. The upper bounds of $E_A^\alpha(0)$ for the four basic blocks and residual blocks have forms of $C_0 2^{-\alpha}$, where $C_0$ can be zero.
Inductively, for $n=k$, $(E_{A_k} ^{\alpha} \circ \cdots \circ E_{A_0} ^{\alpha})(0) \leq C_k 2^{-\alpha}$ for some $C_k$. 
For $n=k+1$, we can determine a constant $C_{k+1}$ that satisfies $(E_{A_{k+1}} ^{\alpha} \circ \cdots \circ E_{A_0} ^{\alpha})(0) \leq C_{k+1} 2^{-\alpha}$ when $A_{k+1}$ is a basic block (see the proof of Theorem \ref{main}).
If $A_{k+1}$ is a residual block with $A_{k+1}(\bb{x}) = \mathcal{G}(\bb{x}) + \bb{Px}$ for some residual mapping $\mathcal{G}(\bb{x})$ and linear projection matrix $\bb{P}$, then
\[
E_{A_{k+1}}^\alpha(C_k 2^\alpha) \leq C_\mathcal{G} 2^{-\alpha} + (\| \bb{P} \|_\infty +  C_\mathcal{G}' )(C_k 2^\alpha) = C_{k+1}2^{-\alpha}
\]
where $C_{k+1}=C_\mathcal{G} + (\| \bb{P} \|_\infty +  C_\mathcal{G}' )C_k$ which is independent of $\alpha$. Therefore, 
\[
(E_{A_{k+1}} ^{\alpha} \circ \cdots \circ E_{A_0} ^{\alpha})(0) \leq E_{A_{k+1}}^\alpha(C_k 2^\alpha) \leq C_{k+1} 2^{-\alpha},
\]
which completes the proof.
\end{proof}
 
{\small

}


\begin{thebibliography}{10}

\bibitem{gilad2016cryptonets}
Ran Gilad-Bachrach, Nathan Dowlin, Kim Laine, Kristin Lauter, Michael Naehrig,
  and John Wernsing.
\newblock Cryptonets: Applying neural networks to encrypted data with high
  throughput and accuracy.
\newblock In {\em International Conference on Machine Learning}, pages
  201--210. PMLR, 2016.

\bibitem{gazelle}
Chiraag Juvekar, Vinod Vaikuntanathan, and Anantha Chandrakasan.
\newblock {GAZELLE}: A low latency framework for secure neural network
  inference.
\newblock In {\em 27th {USENIX} Security Symposium}, pages 1651--1669, 2018.

\bibitem{sealion}
Tim van Elsloo, Giorgio Patrini, and Hamish Ivey-Law.
\newblock Sealion: A framework for neural network inference on encrypted data.
\newblock {\em arXiv preprint arXiv:1904.12840}, 2019.

\bibitem{ngraph}
Fabian Boemer, Yixing Lao, Rosario Cammarota, and Casimir Wierzynski.
\newblock {nGraph-HE}: A graph compiler for deep learning on homomorphically
  encrypted data.
\newblock In {\em Proceedings of the 16th ACM International Conference on
  Computing Frontiers}, pages 3--13, 2019.

\bibitem{FasterCryptoNets}
Edward Chou, Josh Beal, Daniel Levy, Serena Yeung, Albert Haque, and
  Li~Fei-Fei.
\newblock {Faster Cryptonets}: Leveraging sparsity for real-world encrypted
  inference.
\newblock {\em arXiv preprint arXiv:1811.09953}, 2018.

\bibitem{cheetah}
Brandon Reagen, Woo-Seok Choi, Yeongil Ko, Vincent~T Lee, Hsien-Hsin~S Lee,
  Gu-Yeon Wei, and David Brooks.
\newblock Cheetah: Optimizing and accelerating homomorphic encryption for
  private inference.
\newblock In {\em 2021 IEEE International Symposium on High-Performance
  Computer Architecture (HPCA)}, pages 26--39. IEEE, 2020.
  
\bibitem{HCNN_GPU}
Ahmad Al~Badawi, Jin Chao, Jie Lin, Chan~Fook Mun, Jun Jie~Sim, Benjamin
  Hong~Meng Tan, Xiao Nan, Khin Mi~Mi~Aung, and Vijay Ramaseshan~Chandrasekhar.
\newblock Towards the {AlexNet} moment for homomorphic encryption: {HCNN}, the
  first homomorphic {CNN} on encrypted data with {GPUs}.
\newblock {\em IEEE Transactions on Emerging Topics in Computing}, 2020.
  
\bibitem{codespy}
Ehsan Hesamifard, Hassan Takabi, and Mehdi Ghasemi.
\newblock Deep neural networks classification over encrypted data.
\newblock In {\em Proceedings of the Ninth ACM Conference on Data and Application Security and Privacy}, pages 97--108, 2019.

\bibitem{krizhevsky2009learning}
Alex Krizhevsky, Geoffrey Hinton, et~al.
\newblock Learning multiple layers of features from tiny images.
\newblock {\em CiteSeerX Technical Report, University of Toronto}, 2009.
  
\bibitem{GPUCKKS}
Wonkyung Jung, Sangpyo Kim, Jung~Ho Ahn, Jung~Hee Cheon, and Younho Lee.
\newblock Over 100x faster bootstrapping in fully homomorphic encryption
  through memory-centric optimization with {GPUs}.
\newblock {\em Cryptol. ePrint Arch., Tech. Rep. 2021/508}, 2021.

\bibitem{oddbaby}
Joon-Woo Lee, Eunsang Lee, Yongwoo Lee, Young-Sik Kim, and Jong-Seon No.
\newblock High-Precision Bootstrapping of RNS-CKKS Homomorphic Encryption Using Optimal Minimax Polynomial Approximation and Inverse Sine Evaluation.
\newblock In {\em International Conference on the Theory and Applications of
  Cryptographic Techniques, accepted for publication}, 2021.

\bibitem{variance}
Yongwoo Lee, Joon-Woo Lee, Young-Sik Kim, HyungChul Kang, and Jong-Seon No.
\newblock High-precision approximate homomorphic encryption by error variance
  minimization⋆.
\newblock {\em Cryptol. ePrint Arch., Tech. Rep. 2020/834}, 2021.

\bibitem{scale_invariant}
Andrey Kim, Antonis Papadimitriou, and Yuriy Polyakov.
\newblock Approximate homomorphic encryption with reduced approximation error.
\newblock {\em IACR Cryptol. ePrint Arch}, 2020, 2020.

\bibitem{he2016deep}
Kaiming He, Xiangyu Zhang, Shaoqing Ren, and Jian Sun.
\newblock Deep residual learning for image recognition.
\newblock In {\em Proceedings of the IEEE Conference on Computer Vision and
  Pattern Recognition}, pages 770--778, 2016.

\bibitem{simonyan2014very}
Karen Simonyan and Andrew Zisserman.
\newblock Very deep convolutional networks for large-scale image recognition.
\newblock {\em arXiv preprint arXiv:1409.1556}, 2014.

\bibitem{szegedy2015going}
Christian Szegedy, Wei Liu, Yangqing Jia, Pierre Sermanet, Scott Reed, Dragomir
  Anguelov, Dumitru Erhan, Vincent Vanhoucke, and Andrew Rabinovich.
\newblock Going deeper with convolutions.
\newblock In {\em Proceedings of the IEEE Conference on Computer Vision and
  Pattern Recognition}, pages 1--9, 2015.

\bibitem{russakovsky2015imagenet}
Olga Russakovsky, Jia Deng, Hao Su, Jonathan Krause, Sanjeev Satheesh, Sean Ma,
  Zhiheng Huang, Andrej Karpathy, Aditya Khosla, Michael Bernstein, et~al.
\newblock Imagenet large scale visual recognition challenge.
\newblock {\em International Journal of Computer Vision}, 115(3):211--252,
  2015.
  
  \bibitem{she}
Qian Lou, and Lei Jiang.
\newblock She: A fast and accurate deep neural network for encrypted data.
\newblock In {\em Proceedings of the Neural Information Processing Systems}, 2019.

\bibitem{TFHE}
Ilaria Chillotti, Nicolas Gama, Mariya Georgieva, and Malika Izabach{\`e}ne.
\newblock {TFHE}: fast fully homomorphic encryption over the torus.
\newblock {\em Journal of Cryptology}, 33(1):34--91, 2020.
  
\bibitem{BFV}
Junfeng Fan and Frederik Vercauteren. 
\newblock Somewhat practical fully homomorphic encryption.
\newblock {\em IACR Cryptol. ePrint Arch.}, 2012:144, 2012.
  
\bibitem{CKKS}
Jung~Hee Cheon, Andrey Kim, Miran Kim, and Yongsoo Song.
\newblock Homomorphic encryption for arithmetic of approximate numbers.
\newblock In {\em International Conference on the Theory and Application of
  Cryptology and Information Security}, pages 409--437. Springer, 2017.
 
\bibitem{leeminimax}
Eunsang Lee, Joon-Woo Lee, Jong-Seon No, and Young-Sik Kim.
\newblock Minimax approximation of sign function by composite polynomial for
  homomorphic comparison.
\newblock {\em Cryptol. ePrint Arch., Tech. Rep. 2020/834}, 2020.
  
\bibitem{compg}
Jung~Hee Cheon, Dongwoo Kim, and Duhyeong Kim.
\newblock Efficient homomorphic comparison methods with optimal complexity.
\newblock In {\em International Conference on the Theory and Application of
  Cryptology and Information Security}, pages 221--256. Springer, 2020.

\bibitem{Remez}
Eugene Y Remez. 
\newblock Sur la d{\'e}termination des polyn{\^o}mes d’approximation de degr{\'e} donn{\'e}e.
\newblock {\em Comm. Soc. Math. Kharkov}, 10(196):41–63, 1934.
  
\bibitem{Paterson}
Michael S Paterson and Larry J Stockmeyer.
\newblock On the number of nonscalar multiplications necessary to evaluate polynomials.
\newblock {\em SIAM Journal on Computing}, 2(1):60–66, 1973
  
\bibitem{lee2020optimal}
Joon-Woo Lee, Eunsang Lee, Yong-Woo Lee, and Jong-Seon No. 
\newblock Optimal minimax polynomial approximation of modular reduction for bootstrapping of approximate homomorphic encryption. 
\newblock {\em Cryptol. ePrint Arch., Tech. Rep. 2020/552/20200803:084202}, 2020.
  
  \bibitem{szegedy2016rethinking}
Christian Szegedy, Vincent Vanhoucke, Sergey Ioffe, Jon Shlens, and Zbigniew
  Wojna.
\newblock Rethinking the inception architecture for computer vision.
\newblock In {\em Proceedings of the IEEE Conference on Computer Vision and
  Pattern Recognition}, pages 2818--2826, 2016.
  
\bibitem{RNS-CKKS}
Jung Hee Cheon, Kyoohyung Han, Andrey Kim, Miran Kim, and Yongsoo Song.  
\newblock A full RNS variant of approximate homomorphic encryption. 
\newblock In {\em International Conference on Selected Areas in Cryptography, pages 347--368. Springer}, 2018.

\end{thebibliography}
\end{document}